\theoremstyle{plain}
\newtheorem{Theorem}{Theorem}[section]		
\newtheorem{Lemma}[Theorem]{Lemma}
\newcommand{\ts}{^{\sf T}} 
\newcommand{\ilpdif}[2]{\partial #1/{\partial #2 }}
\newcommand{\pdif}[2]{\frac{\partial #1}{\partial #2}}
\newcommand{\pddif}[3]{\frac{\partial^2 #1}{\partial #2 \partial #3}}
\newcommand{\ilpddif}[3]{\partial^2 #1/{\partial #2 \partial #3}}
\newcommand{\bp}{{\bm \beta}}
\newcommand{\blind}{1}
\begin{document}

\def\spacingset#1{\renewcommand{\baselinestretch}%
{#1}\small\normalsize} \spacingset{1}


\if1\blind
{
  \title{\bf Fast calibrated additive quantile regression}
  \author{Matteo Fasiolo\thanks{
    This work was funded by EPSRC grant by EPSRC grants EP/K005251/1, EP/N509619/1 and the first author was also partially supported by EDF. }, \hspace{.2cm}
    Simon N. Wood, \\
    School of Mathematics, University of Bristol \\
    Margaux Zaffran,  \\
    ENSTA Paris \\ 
    and \\
    Rapha\"el Nedellec and Yannig Goude \\
    {\'E}lectricit{\'e} de France R\&D}
    \date{}
  \maketitle
} \fi

\if0\blind
{
  \bigskip
  \bigskip
  \bigskip
  \begin{center}
    {\LARGE\bf Fast calibrated additive quantile regression}
\end{center}
  \medskip
} \fi

\bigskip
\begin{abstract}
We propose a novel framework for fitting additive quantile regression models, which provides well calibrated inference about the conditional quantiles and fast automatic estimation of the smoothing parameters, for model structures as diverse as those usable with distributional GAMs, while maintaining equivalent numerical efficiency and stability. The proposed methods are at once statistically rigorous and computationally efficient, because they are based on the general belief updating framework of \cite{bissiri2016general} to loss based inference, but compute by adapting the stable fitting methods of \cite{wood2016smoothing}. We show how the pinball loss is statistically suboptimal relative to a 
novel smooth generalisation, which also gives access to fast estimation methods. Further, we provide a novel calibration method for efficiently selecting the `learning rate' balancing the loss with the smoothing priors during inference, thereby obtaining reliable quantile uncertainty estimates. Our work was motivated by a probabilistic electricity load forecasting application, used here to demonstrate the proposed approach. The methods described here are implemented by the \verb|qgam| R package, available on the Comprehensive R Archive Network (CRAN).
\end{abstract}

\noindent%
{\it Keywords:}  Quantile Regression; Generalized Additive Models; Penalised Regression Splines; Calibrated Bayes; Non-parametric Regression; Electricity Load Forecasting.




\section{Introduction} \label{sec:introduction}

Generalized Additive Models \citep[GAMs,][]{hastie1990generalized} are flexible and interpretable statistical models that are widely used in applied statistics, especially since the advent of efficient and stable methods for smoothing parameter selection and interval estimation in this model class \cite[see e.g.][]{wood2000modelling, ruppert2003semiparametric, kim2004smoothing, fahrmeir2004penalized, wood2006generalized}. The purpose of this work is to provide an equivalently useful framework for well-calibrated additive quantile regression models. Our methods are novel in that all smoothing parameters and other hyper parameters are estimated automatically using numerically robust and efficient methods which produce uncertainty estimates simultaneously with point estimates.

We were motivated by problems in electricity load forecasting. Electricit\'e de France (EDF), France's main electricity producer, has had considerable success using conventional GAMs for operational load forecasting. However, the whole conditional load distribution is rarely needed for production planning purposes, which focus mostly on tail estimates. This is because the loss function associated with forecasting errors is highly asymmetric, due to technical constraints (e.g. plant-specific start-up times or increasing fuel cost along the electricity production stack) and to the regulatory framework (e.g. monetary sanctions for over/under production). Further, the conditional distribution of the electricity load is typically highly skewed and time-dependent. At system-wide or substation level this problem is relatively mild, but new technologies (e.g. smart meters) are producing datasets where this issue is much more extreme, due to the low level of aggregation. Full distributional modelling of the response distribution might be overly ambitious for these upcoming applications, hence it might be preferable to focus on estimating only the conditional quantiles most relevant to production planning or smart grid management.

To be usable in practical forecasting, additive quantile regression methods must have several properties: 1) the range of model structures available for modelling quantiles must be comparable to that available under conventional GAMs, otherwise the benefits of modelling quantiles may be offset by insufficient model flexibility; 2) smoothing and other tuning parameters must be selected automatically, otherwise the modelling process becomes too labour intensive and subjective for  operational use; 3) uncertainty estimation has to be part of model estimation, since knowing forecast uncertainty is essential for operational use and 4) methods must be sufficiently numerically efficient and robust for routine deployment. The work reported here started when two of the authors (YG and RN) were participating in the GEFCom2014 forecasting competition, and found that existing additive quantile regression method implementations failed to meet these requirements, forcing them to develop the ad hoc procedure described in \cite{gaillard2016additive}.

The framework developed in this paper meets the four requirements by taking an empirical Bayesian approach to the general belief-updating framework of \cite{bissiri2016general}. Specifically we represent smooth relationships between regressors and the quantile of interest using spline basis expansions, and impose Gaussian smoothing priors to control model complexity. Random effects and parametric terms present no extra complication. By adopting a statistically improved smooth generalisation of the usual quantile regression `pinball' loss \citep{koenker2005quantile}, we are able to perform the computations required for belief updating of priors using the loss, and to estimate smoothing parameters, using the computational methods for general smooth modelling of \cite{wood2016smoothing}. This allows us to achieve properties 1-4, provided that we can obtain the additional `learning rate' parameter required by the general belief updating framework. We show how to do this efficiently and automatically in order to achieve good calibration of the uncertainty estimates. Figure \ref{fig:nicePlots} provides some simple examples of the variety of models that our approach encompasses.

This is an advance relative to existing methods because, to our knowledge, pre-existing additive quantile regression methods fail to meet one or more of the four practically important requirements set above. For instance, the \verb|quantreg| R package, which is based on the methods of \cite{koenker2013quantreg}, 
requires users to select the smoothing parameters manually. The gradient boosting quantile regression method implemented by the \verb|mboost| R package \citep{hothorn2010model} 
requires users to manually choose the degrees of freedom used by each base model. In addition, \verb|mboost| uses bootstrapping to estimate parameter uncertainty, while the approach proposed here quantifies uncertainty using computationally efficient asymptotic approximations. 
\cite{yue2011bayesian} and \cite{waldmann2013bayesian} describe how to perform Bayesian inference for semi-parametric additive quantile regression models. The first proposal is implemented in the \verb|INLA| software \citep{martins2013bayesian}, but the associated documentation discourages its use. In the second fitting is performed only via Markov Chain Monte Carlo methods, which are much slower than the direct optimisation methods proposed here. Further, it does not produce credible intervals with adequate frequentist properties for extreme quantiles, as mentioned below. The \verb|vgam| R package \citep{yee2008vgam} provides a method for fitting additive quantile regression models, but also in this case the complexity of the smooth terms is determined manually. The work of \cite{lin2013variable} is not an alternative to what we propose here, because their focus is variable selection, rather than smoothing.

Quantile regression is traditionally based on the pinball loss \citep{koenker2005quantile}, and not on a distributional model for the observations density, $p(y|{\bf x})$, which impedes direct application of Bayes's rule. To circumvent this problem, \cite{yu2001bayesian} propose adopting an Asymmetric Laplace (AL) model for $p(y|{\bf x})$, due to the equivalence between the AL negative log-density and the pinball loss. While \cite{sriram2013posterior} prove that the resulting posterior concentrates around the true quantile, naively treating the AL density as an adequate probabilistic description of the data is problematic. In particular, \cite{waldmann2013bayesian} show that the resulting posterior credible intervals have poor frequentist calibration properties, especially for tail quantiles. Furthermore, this work will demonstrate that, in a non-parametric setting, selecting
the scale parameter of the AL density using a likelihood based approach can lead to inaccurate quantile estimates (see Section \ref{sec:addExample}). We solve both issues by adopting the beliefs updating framework of \cite{bissiri2016general}, and by coupling it with a calibration method which explicitly aims at achieving good frequentist properties.

This work also addresses the limitations implied by direct use of the pinball loss. The first issue is that this loss is piecewise linear, which impedes the use of computationally efficient fitting methods, designed to work with continuously differentiable, strongly convex functions. \cite{yue2011bayesian} and \cite{oh2012fast} address this problem by proposing smooth approximations to, respectively, the AL density and the pinball loss. The second issue (see Section \ref{sec:approxLapl}) is that the pinball loss is statistically suboptimal relative to a smoothed generalisation of the loss. Rather than smoothing the loss as little as possible (as in previous work) we therefore adopt the novel approach of using the degree of loss smoothness that minimizes the asymptotic MSE of the model regression coefficients.
%




\begin{figure} 
\centering
\includegraphics[scale=0.44]{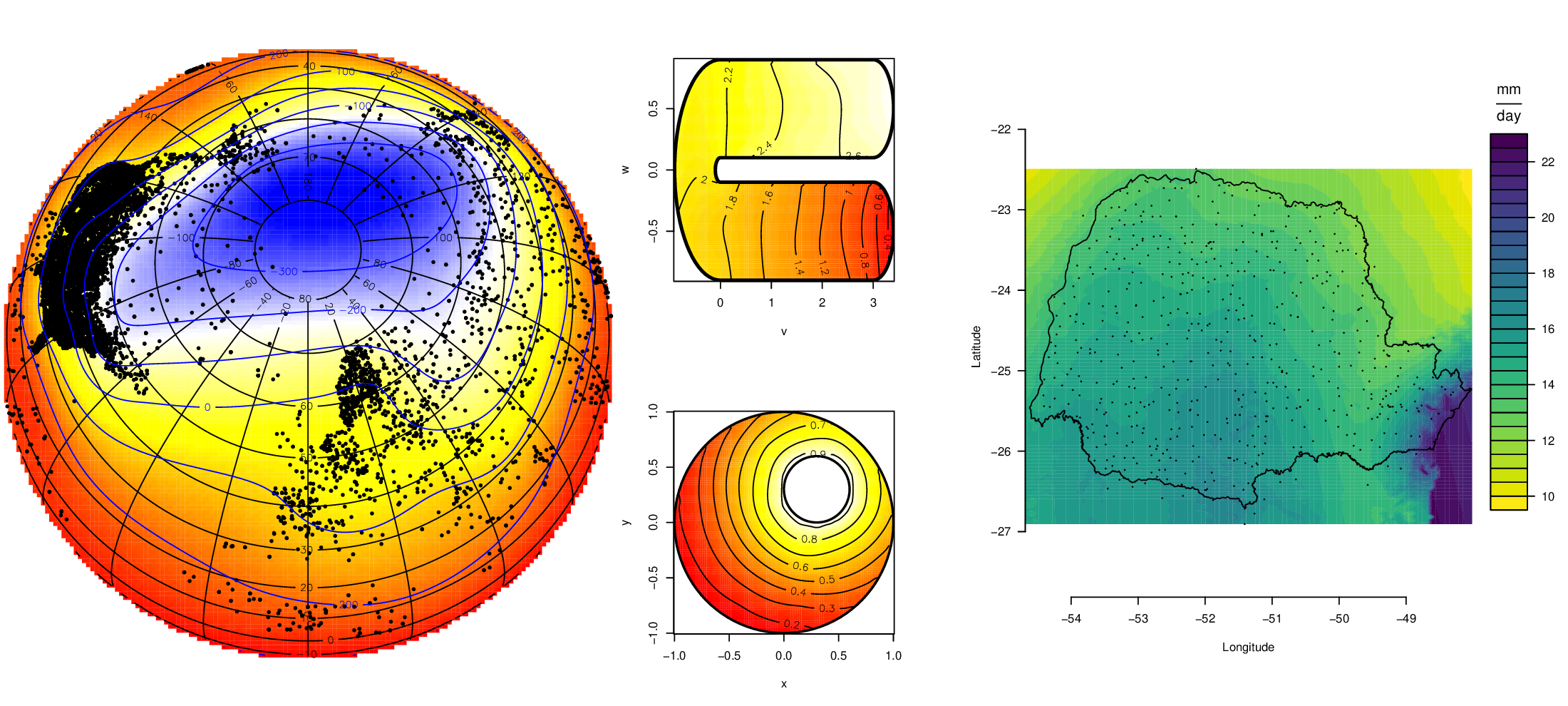} 
\cprotect\caption{Examples of the smooth components that may be included in the additive quantile regression models fitted using the approach proposed here. Left: effect of spatial location, defined using splines on the sphere, on quantile $\tau = 0.1$ of minimum daily temperatures, estimated using the Global Historical Climatology Network (GHCN) dataset \citep{menne2012overview}. The Gulf Stream is visible. Centre: finite area spatial components, based on soap film smoothers, of two GAM fits for $\tau = 0.5$. The data is simulated. Right: sum of the effects of spatial location, defined using an isotropic thin-plate spline basis, distance from the ocean and elevation, on quantile $\tau = 0.9$ of average weekly rainfall in Paran\'{a} state, Brazil. The dataset is available within the  \verb|R-INLA| R package \citep{lindgren2015bayesian}.}
\label{fig:nicePlots}
\end{figure}



The rest of the paper is organised as follows. In Section \ref{sec:background} we briefly review additive quantile regression based on the pinball loss, while in Section \ref{sec:approxLapl} we describe the new loss function and show how its tuning parameter can be selected. In Section \ref{sec:fixedSigma} we show how quantile regression can be set in a Bayesian context using the framework of \cite{bissiri2016general}, and we explain how additive quantile regression models can be fitted efficiently, if the new loss function is adopted. We propose a novel approach for posterior calibration in Section \ref{sec:tuningLearn}, and we test it on simulated examples in Section \ref{sec:simulExamples}. In Section \ref{sec:gefcom} we demonstrate the performance of the proposed approach for probabilistic electricity load forecasting.

\section{Background on additive quantile regression} \label{sec:background}

Quantile regression aims at modelling the $\tau$-th quantile (where $\tau \in (0, 1)$) of the response, $y$, conditionally on a $p$-dimensional vector of covariates, $\bm x$.  More precisely, if $F(y|\bm x)$ is the conditional c.d.f. of y, then the $\tau$-th conditional quantile is
$
\mu = \inf \{y:F(y|{\bm x}) \geq \tau\}.
$
The $\tau$-th conditional quantile can also be defined as the minimiser of the expected loss
\begin{equation} \label{eq:trueObj}
L(\mu|{\bm x}) = \mathbb{E} \big \{ \; \rho_\tau(y-\mu) | {\bm x} \big \}  = \int \; \rho_\tau(y-\mu) dF(y|{\bm x}),  
\end{equation}
w.r.t. $\mu = \mu({\bm x})$, where  
\begin{equation} \label{eq:checkF}
\rho_\tau(z) = (\tau-1)z \mathbbm{1}(z < 0) + \tau z \mathbbm{1}(z \geq 0),
\end{equation}
is the so-called pinball loss. Given a sample of size $n$, one approximates $dF(y)$ with its empirical version, $dF_n(y)$, which leads to the quantile estimator
$$
\hat{\mu} = \underset{\mu}{\text{argmin}}\; \frac{1}{n} \sum_{i=1}^n \; \rho_\tau\{y_i-\mu( {\bm x}_i)\},
$$
where ${\bm x}_i$ is the $i$-the vector of covariates.

In this work we assume that $\mu({\bm x})$ has an additive structure such as
$
\mu({\bm x}) = \sum_{j=1}^m f_j({\bm x}),  
$
where the $m$ additive terms can be fixed, random or smooth effects, defined in terms of spline bases. For instance, a marginal smooth effect could be
$
f_j({\bm x}) = \sum_{k=1}^r \beta_{jk} b_{jk}({x}_{j}),
$
where $\beta_{jk}$ are unknown coefficients and $b_{jk}({x}_{j})$ are known spline basis functions. Analogous expressions can be used to define joint or more complex smooths, such as those shown in Figure \ref{fig:nicePlots}. The basis dimension $r$ is typically chosen to be sufficiently generous that we can be confident of avoiding over-smoothing, but the actual complexity of $f_j$ is controlled by a penalty on ${\bm \beta}_{j}$, designed to penalise departure from smoothness. 

More specifically, let $\mu({\bm x}_i) = {\bf x}_{i}\ts{\bm \beta}$, where ${\bf x}_i$ is the $i$-th row of the $n \times d$ design matrix $\bf X$, containing the spline basis functions evaluated at $\bm x_i$, and define the penalised pinball loss
\begin{equation}\label{eq:penPinLoss}
V({\bm \beta}, {\bm \gamma}, \sigma) =  \sum_{i=1}^n  \frac{1}{\sigma} \rho_\tau\{y_i-\mu( {\bm x}_i)\} + \frac{1}{2}\sum_{j=1}^m \gamma_j {\bm \beta}\ts {\bf S}_{j} {\bm \beta},
\end{equation}
where $\bm \gamma = \{\gamma_1, \dots, \gamma_m\}$ is a vector of positive smoothing parameters and the ${\bf S}_j$'s are positive semi-definite matrices, used to penalise the wiggliness of $\mu(\bm x)$. $1/\sigma > 0$ is the so-called `learning rate', which determines the relative weight of the loss and the penalty. As we will show later, it is possible to let $\sigma$ depend on $\bm x$, which can be advantageous when the variance of $y$ varies strongly with the covariates. For fixed $\sigma$ and $\bm \gamma$, the regression coefficients can be estimated by minimising (\ref{eq:penPinLoss}) and in Section \ref{sec:fixedSigma} we will show that the resulting estimator, $\hat{\bm \beta}$, can be seen as a Maximum A Posteriori (MAP) estimator under the belief updating framework of \cite{bissiri2016general}. While the main challenge is the selection of $\sigma$ and $\bm \gamma$, we first demonstrate that the pinball loss is statistically suboptimal, and can be improved in a manner that has the useful side effect of facilitating computation. 

\section{Optimally smoothing the pinball loss} \label{sec:approxLapl}
 
We consider the following generalisation of the scaled pinball loss
\begin{equation} \label{eq:elfLoss}
\tilde{\rho}(y-\mu) = (\tau - 1)\frac{y-\mu}{\sigma} + \lambda \log(1+e^{\frac{y-\mu}{\lambda\sigma}}),
\end{equation}
where $\lambda>0$ and the pinball loss is recovered as $\lambda \rightarrow 0$. Let $\text{Beta}(\cdot, \cdot)$ be the beta function. While the proposed loss can be seen as an instance of the function smoothing methods of \cite{chen1995smoothing}, we call (\ref{eq:elfLoss}) the Extended Log-F (ELF) loss because, upon normalisation, $\exp(- \tilde{\rho})$ becomes
\begin{equation} \label{eq:logFLambda}
\tilde{p}_{F}(y - \mu)=\frac{e^{(1-\tau)\frac{y-\mu}{\sigma}}(1+e^{\frac{y-\mu}{\lambda\sigma}})^{-\lambda}}{\lambda\sigma\text{Beta}\big\{\lambda(1-\tau),\lambda\tau\big\}},
\end{equation}
which is the p.d.f. of an extension of the log-F density of \cite{jones2008class}, as detailed in the Supplementary Material   \ref{sec:exLogDens} (henceforth SM \ref{sec:exLogDens}). 

Let $\hat{\bm \beta}$ be the minimiser of the penalised ELF loss, the latter being defined analogously to (\ref{eq:penPinLoss}). Define ${\bf V} = \tau(1-\tau)\mathbb E({\bf x}_{i}{\bf x}_{i}\ts)$ and $u_i = y_i - \mu({\bm x}_i)$, its p.d.f $f_{u|{\bm x}}(u|{\bm x})$ and c.d.f. $F_{u|{\bm x}}(u|{\bm x})$. Assume that the pairs $\{{\bm x}_i, {y}_i\}$ are i.i.d., that $\lambda \rightarrow 0$ as $n\rightarrow\infty$ and consider a simplified setting where the smoothing penalty is asymptotically dominated by the ELF loss and the true quantile $\mu^0(\bm x) \approx {\bf x}\ts \bm \beta_0$, for some $\bm \beta_0$. The latter assumption implies that the true quantile is approximately linear in the design vector $\bf x$ of evaluated spline bases, so that $d$ does not grow with $n$.  In SM \ref{app:OptSmoothLoss} we show that, under further assumptions specified therein, the asymptotic MSE of $\hat{\bm \beta}$ is
\begin{align} \label{eq:AMSE}
\text{AMSE}(h) = & \mathbb{E}\{n(\hat{\bm \beta}-{\bm \beta}_0)(\hat{\bm \beta}-{\bm \beta}_0)\ts\} \nonumber \\ = & 
{\bm \Sigma}_f^{-1}{\bf V}^{\frac{1}{2}}[{\bf I} + nh^4{\bf B}{\bf B}\ts - h{\bf A}{\bf A}\ts]{\bf V}^{\frac{1}{2}}{\bm \Sigma}_f^{-1} + O(h^2) + O(1/nh) + O(nh^6),
\end{align}
where $h = \lambda\sigma$, ${\bm \Sigma}_f = \mathbb{E}\{{\bf x}_i{\bf x}_i \ts f_{u|{\bm x}}(0|{\bm x}_i)\}$, ${\bf B} = \pi^2\mathbb{E}\{ f'_{u | {\bm x}}(0 | {\bm x}_{i}){\bf V}^{-1/2}{\bf x}_{i}\}/6$ and ${\bf A} = \mathbb{E}\{f_{u|{\bm x}}(0|{\bm x}_i)^{1/2}$\\${\bf V}^{-1/2}{\bf x}_i\}$. Following \cite{kaplan2017smoothed}, we seek the value of $h$ that minimises (\ref{eq:AMSE}), after having discarded the higher-order terms. This is justified for $h = O(n^{-\psi})$ with $1/5<\psi<2/5$, under which the first term on the r.h.s. of (\ref{eq:AMSE}) dominates the remainder. The AMSE of a linear combination $\sqrt{n}{\bf c}(\hat{\bm \beta} - {\bm \beta})$ is ${\bf u}\ts({\bf I} + nh^4{\bf B}{\bf B}\ts - h{\bf A}{\bf A}\ts){\bf u}$, where ${\bf u} = {\bf V}^{1/2}{\bm \Sigma}_f^{-1} {\bf c}$. Given that there is no single $h^*$ which minimises the AMSE for all values of ${\bf c}$, we minimise $\text{tr}(nh^4{\bf B}{\bf B}\ts - h{\bf A}{\bf A}\ts)$, which is equivalent to minimising the sum of eigenvalues of $nh^4{\bf B}{\bf B}\ts - h{\bf A}{\bf A}\ts$. This leads to
$$
h^* = \left[\frac{{\bf A}\ts{\bf A}}{4n{\bf B}\ts{\bf B}}\right]^{\frac{1}{3}},
$$
which, under the further assumption that the $u_i$'s are independent of the ${\bm x}_i$'s, becomes
\begin{equation} \label{eq:optBanZ}
\tilde{h}^* = \left[\frac{d}{n}\frac{9f_{u}(0)}{\pi^4f'_{u}(0)^2}\right]^{\frac{1}{3}},
\end{equation}
where $d = \text{dim}(\bf x)$, as proved in SM \ref{app:OptSmoothLoss}. So clearly the pinball loss ($\lambda=0$) is not optimal, and we should use the ELF loss with smoothness determined by $\tilde{h}^*$. This substitution of the smoothed loss greatly simplifies computation as it permits the use of smooth optimisation methods for estimation. We now turn to the practical estimation of $\tilde h^*$.

\subsection{Estimating $\tilde h^*$} \label{sec:estHstar} 

Here we propose methods for estimation of $\tilde h^*$, considering also the treatment of heteroscedastic data. In particular, using a single bandwidth is inadequate in contexts where the variance of $y$ strongly depends on $\bm x$. To address this issue, consider the location-scale model $y|{\bm x} \sim \alpha({\bm x}) + \kappa({\bm x}) z$, where $\mathbb{E}(z|{\bm x})=0$ and $\text{var}(z|{\bm x})=1$. Assume that the $z_i$'s are i.i.d, and let $\tilde{h}^*_z$ be the optimal bandwidth for regressing $z$ on $\bf x$. Then the corresponding optimal bandwidth for $y$ is $\tilde{h}^*(\bm x) = \tilde{h}_z^* \kappa({\bm x})$, as can be verified using a simple change of variable argument. 
Given that in our context $\tilde{h}_z^* \kappa({\bm x}) = \lambda\sigma$, one of the terms on the r.h.s. must depend on $\bm x$. We choose $\sigma(\bm x) = \sigma_0 \tilde{\sigma}(\bm x)$, where $\sigma_0$ has been chosen using the methods of Section \ref{sec:tuningLearn} and $n^{-1}\sum_{i=1}^n\tilde{\sigma}({\bm x}_i)=1$. This implies that $\lambda = n^{-1}\tilde{h}_z^*\sum_{i=1}^n\kappa({\bm x}_i)/\sigma_0$. Under the above location-scale model, this leads to $1/\sigma(\bm x) \propto \omega(\bm x) = f_u(0|\bm x)$, which is the optimal weight function under the pinball loss $\sum_{i=1}^n \omega({\bm x}_i)\rho_{\tau}\{y_i - \mu(\bm x_i)\}$ \citep{koenker2005quantile}. Hence, while the baseline learning rate $1/\sigma_0$ is determined using the methods of Section \ref{sec:tuningLearn}, its $\bm x$-dependent component $1/\tilde{\sigma}({\bm x})$ is proportional to the optimal weighting function for quantile regression. It is reasonable to expect that the relative learning rate will be near-optimal also under the ELF loss, which is a close approximation to the pinball loss.

Our approach to loss bandwidth selection can be summarized in the following steps:
\begin{enumerate}
\item estimate $\alpha(\bm x)$ and $\kappa(\bm x)$. Here we do this using a Gaussian GAM, where the mean and variance are estimated jointly using the methods of \cite{wood2016smoothing}. We model $\alpha(\bm x)$ using the same model used for $\mu_\tau(\bm x)$, while for $\kappa(\bm x)$ we typically use a simpler model (see Section \ref{sec:gefcom} for an example). 
\item Obtain the standardised residuals $z_i = \{y_i - \hat{\alpha}(\bm x_i)\}/\hat{\kappa}(\bm x_i)$, for $i = 1, \dots, n$, and get estimates $\hat{f}_z(\hat{\xi}_\tau)$ and $\hat{f}'_z(\hat{\xi}_\tau)$, where $\hat{\xi}_\tau$ is an estimate of the $\tau$-th quantile of $z$. We do this parametrically, by fitting the flexible sinh-arsinh distribution of \cite{jones2009sinh} to the $z_i$'s. If $|\hat{\xi}_\tau - \hat{\xi}_{\tau^*}| < \epsilon$, where $0 < \epsilon \ll 1 $ and $\hat{\xi}_{\tau^*}$ is the mode of $\hat{f}_z$, we set $\hat{\xi}_\tau$ to $\hat{\xi}_{\tau + \delta}$ if $\hat{\xi}_\tau - \hat{\xi}_{\tau^*} > 0$ (or to $\hat{\xi}_{\tau - \delta}$ if $\hat{\xi}_\tau - \hat{\xi}_{\tau^*} < 0$), for some small $\delta > 0$. This is done to avoid dividing by $\hat{f}'_z(\hat{\xi}_\tau) \approx 0$ in (\ref{eq:optBanZ}).
\item Get the bandwidth $\tilde{h}^*(\bm x) = \tilde{h}_z^*\hat{\kappa}(\bm x)$, where $\tilde{h}_z^*$ is obtained by plugging $\hat{f}_z$ and $\hat{f}'_z$ in (\ref{eq:optBanZ}), and by setting $d$ to be the number of effective degrees of freedom used to model $\alpha(\bm x)$ in step 1. Decompose the bandwidth into $\lambda$ and $\sigma(\bm x)$ as explained above.
\end{enumerate}
Notice that the Gaussian GAM and the sinh-arcsinh density of steps 1 and 4 have to be fitted only once, so the resulting estimates can then be used to determine the bandwidths to be used for quantile regression at several probability levels.  
 
\section{Model fitting with known learning rate} \label{sec:fixedSigma}

Having defined a smooth generalisation of the pinball loss, and proposed methods for selecting its degree of smoothness, we describe a framework for fitting splines based additive quantile models. We first explain how to estimate the regression coefficients and the smoothing 
parameters, given $\sigma_0$. Estimation of $\sigma_0$ is covered in Section \ref{sec:tuningLearn}.

\subsection{Bayesian quantile regression via coherent belief-updating} \label{sec:bayesQR}

To set quantile regression in a Bayesian framework, we need to define a prior distribution on the regression coefficients, $p(\bm \beta)$, and a mechanism for updating it to the corresponding posterior, $p({\bm \beta}|{\bf y})$. Here we use the smoothing prior $\bm \beta \sim N({\bf 0}, {\bf S}^-)$, where ${\bf S}^-$ is an appropriate generalised matrix inverse of matrix ${\bf S}^{\bm \gamma} = \sum_{i=1}^m\gamma_i{\bf S}_j$, where $\bm \gamma$ and the ${\bf S}_j$'s have been defined in Section \ref{sec:background}. Given such a prior, direct application of Bayes' rule is impeded by the fact that we base quantile regression on the ELF loss, not on a probabilistic model for the observation density, $p(y|{\bm \beta})$, so the likelihood function is missing. Fortunately, this obstacle can be overcome by adopting the general belief-updating framework of \cite{bissiri2016general}, within which a prior distribution can be updated to produce a posterior while using a loss function, rather than a full likelihood, to connect model parameters to the data. Before applying it to quantile regression, we briefly outline the framework in its general form.

Assume that we are interested in finding the vector of model parameters $\bm \beta$ minimising
\begin{equation} \label{eq:exloss}
\mathbb{E}\{L(\bm \beta)\} = \int L({y}, \bm \beta) f({y}) d{y},
\end{equation}
where $L(\cdot, \cdot)$ is a general loss function and $f({y})$ is the p.d.f. of ${y}$. Suppose that we have a prior belief about $\bm \beta$, quantified by the prior density $p(\bm \beta)$. Then \cite{bissiri2016general} argue that, given some data ${y}$, a coherent approach to updating $p(\bm \beta)$ is the posterior
$$
p(\bm \beta|y) = \frac{e^{-\frac{1}{\sigma}L({y}, \bm \beta)} p(\bm \beta)}{\int e^{-\frac{1}{\sigma}L({y}, \bm \beta)} p(\bm \beta) \, d{\bm \beta}},
$$ 
where $1/\sigma > 0$ is a `learning rate', determining the relative weight of the loss and the prior. When multiple samples, ${\bf y} = \{ y_1, \dots, y_n \}$, are available this becomes
\begin{equation} \label{eq:scaledGibbsPost}
p(\bm \beta|{\bf y}) \propto  e^{-\frac{1}{\sigma}\sum_{i = 1}^n L(y_i, \bm \beta)} p(\bm \beta).
\end{equation}
where $\sum_{i = 1}^nL(y_i, \bm \beta)$ is an estimate of (\ref{eq:exloss}). Following \cite{syring2015scaling} we call (\ref{eq:scaledGibbsPost}) the `Gibbs posterior' and its negative normalising constant the `marginal loss'. 

Quantile regression, which we base on the ELF loss, fits squarely into this framework. In fact, the Gibbs posterior corresponding to such loss is
\begin{equation} \label{eq:ELFPost}
p(\bm \beta|{\bf y}) \propto \prod_{i=1}^n \tilde{p}_{F}\{y_i - \mu({\bm x}_{i})\}\, p(\bm \beta),
\end{equation}
where $\tilde{p}_{F}$ is the ELF density (\ref{eq:logFLambda}), which implicitly depends on $\tau$, $\lambda$, $\sigma$ and $\bm \beta$ (the latter via $\mu({\bm x}_i)$). 
In the next section we show how the regression coefficients can be estimated by maximizing the Gibbs posterior (\ref{eq:ELFPost}), given the smoothing parameters and learning rate.



\subsection{Estimating the regression coefficients, $\beta$, 
given $\gamma$ and $\sigma_0$} \label{sec:estRegrCoef}

Indicate with $\text{lo}\{\mu(\bm x_i), \sigma(\bm x_i)\}$ the $i$-th element of the ELF loss (\ref{eq:elfLoss}) where, for fixed $\sigma_0$, parameters $\lambda$ and $\sigma(\bm x_i) = \sigma_0 \tilde{\sigma}(\bm x_i)$ have been selected using the methods of Section \ref{sec:approxLapl}.  Then, the negative Gibbs posterior log-density of $\bm \beta$ is proportional to the penalised loss
\begin{equation} \label{eq:LogPosterior}
\tilde{V}({\bm \beta},{\bm \gamma}, \sigma_0) = \sum_{i=1}^n \text{lo}\{\mu(\bm x_i), \sigma(\bm x_i)\} + \frac{1}{2}\sum_{j=1}^m \gamma_j {\bm \beta}\ts {\bf S}_{j} {\bm \beta}.
\end{equation}
Hence, MAP estimates of the regression coefficients, $\hat{\bm \beta}$, can be obtained by minimising (\ref{eq:LogPosterior}), for fixed $\bm \gamma$ and $\sigma_0$. Given that the objective function is smooth and convex, this could be done efficiently using Newton algorithm, but a more stable solution can be obtained by exploiting orthogonal methods for solving least squares problems. In particular, notice that the minimiser of (\ref{eq:LogPosterior}) corresponds to that of 
\begin{equation} \label{eq:DevCrit}
\tilde{V}_D(\bm \beta, \bm \gamma, \sigma_0) = \sum_{i=1}^n \text{Dev}_i\left\{\bm \beta, \sigma({\bm x}_i)\right\} + \sum_{j=1}^m \gamma_j \bm \beta\ts {\bf S}_j \bm \beta,
\end{equation}
where $\text{Dev}_i\left\{\bm \beta, \sigma({\bm x}_i)\right\} = 2[\text{lo}\{\mu(\bm x_i), \sigma(\bm x_i)\}-\tilde{\text{ll}}]$ and $\tilde{\text{ll}}$ are, respectively, the $i$-th component of the model deviance, based on (\ref{eq:logFLambda}), and the saturated loss, obtained by minimising (\ref{eq:elfLoss}) w.r.t. $\mu$. Then the regression coefficients can be estimated by Penalised Iteratively Re-weighted Least Squares (PIRLS), that is by iteratively minimising
\begin{equation} \label{eq:pirls}
\sum_{i=1}^n w_i\{z_i - \mu_i\}^2 + \sum_{j=1}^m \gamma_j \bm \beta\ts {\bf S}_j \bm  \beta,
\end{equation} 
where
$$
z_i = {\mu}_i - \frac{1}{2w_i}\frac{\partial \text{Dev}_i}{\partial {\mu}_i}, \;\;\;
w_i = \frac{1}{2}\frac{\partial^2 \text{Dev}_i}{\partial{\mu}^2_i},
$$
while ${\mu}_i = {\bf x}_{i} \ts \bm \beta$ and $\text{Dev}_i = \text{Dev}_i\left\{\bm \beta, \sigma({\bm x}_i)\right\}$.

\subsection{Selecting the smoothing parameters, $\gamma$, given $\sigma_0$} \label{sec:estSmoothPar}


A natural approach to selecting $\bm \gamma$, for fixed $\sigma_0$, is minimising the marginal loss
\begin{equation} \label{eq:goodLoss}
G({\bm \gamma}, \sigma_0) = - \int \exp \bigg[ -\sum_{i=1}^n \text{lo}\{\mu(\bm x_i), \sigma(\bm x_i)\}  \bigg] p(\bm \beta|{\bm \gamma})d\bm \beta,
\end{equation}
which, as we noted in Section \ref{sec:bayesQR}, is the negative of the normalising constant of the Gibbs posterior. This is important from a computational point of view, because $G({\bm \gamma},\sigma_0)$ can be computed and minimised using efficient methods, originally developed to handle marginal likelihoods. In particular, $G({\bm \gamma},\sigma_0)$ involves an intractable integral which can be approximated using a Laplace approximation. This results in the Laplace Approximate Marginal Loss (LAML) criterion
\begin{equation} \label{eq:LAMLsimple}
G_{L}({\bm \gamma}, \sigma_0) = \frac{1}{2}\tilde{V}_D(\hat{\bm \beta}, \bm \gamma, \sigma_0) + n\tilde{\text{ll}} +  \frac{1}{2} \Big [ \log |{\bf X\ts  W  X +  S^{\bm \gamma}| - \log|S^{\bm \gamma}|_+} \Big] - \frac{M_p}{2} \log(2 \pi),
\end{equation}
where $\hat{\bm \beta}$ is the minimiser of (\ref{eq:LogPosterior}), $\tilde{\text{ll}}$ is the saturated loss, $\bf W$ is a diagonal matrix such that ${\bf W}_{ii} = w_i$, $M_p$ is the dimension of the null space of ${\bf S}^{\bm \gamma}$ and $|\bf S^{\bm \gamma}|_+$ is the product of its non-zero eigenvalues.

LAML can be efficiently minimised w.r.t. $\bm \gamma$, using an outer Newton algorithm. Numerically stable formulas for computing LAML and its derivatives are provided by \cite{wood2016smoothing}. Importantly, the derivatives of $\hat{\bm \beta}$ w.r.t. $\bm \gamma$ are obtained by implicit differentiation which requires computing derivatives up to fourth order of the ELF loss w.r.t. $\mu$. Notice that the $w_i$'s in (\ref{eq:pirls}) and (\ref{eq:LAMLsimple}) can be very close to zero when fitting quantile regression models based on the ELF density, hence obtaining reliable and numerically stable estimates requires modifying the PIRLS iteration and the computation of (\ref{eq:LAMLsimple}) and its derivatives. This more stable implementation is described in SM \ref{app:stableNewDensity}.

The Laplace approximation to the negative marginal log-likelihood based on the ELF density is obtained simply by adding $\sum_i \log[\lambda\sigma(\bm x_i)\text{Beta}\{\lambda(1-\tau),\lambda\tau\}]$ to (\ref{eq:LAMLsimple}), and it is possible to optimise it w.r.t. $\sigma_0$ as well as ${\bm \gamma}$. But $\sigma_0$ is confounded with the learning rate, so this can not be justified by the \cite{bissiri2016general} framework. Indeed, in Section \ref{sec:simulExamples} we refer to this approach as LAML selection of $\sigma_0$ and present examples of its failure in practice: it often produces inaccurate fits and poor interval calibration. Instead, Section \ref{sec:tuningLearn} presents a calibration-based approach to the selection of $\sigma_0$ which, as the examples will show, alleviates both issues.

\section{Calibrating $\sigma_0$} \label{sec:tuningLearn}

Here we propose a novel method for selecting $\sigma_{0}$, which aims at obtaining approximately well-calibrated credible intervals for the quantile function, $\mu({\bm x})$. In particular, let $C_{\alpha}\{\sigma_{0},{\bf y}\}$ be the credible interval for $\mu({\bm x})$, at level $\alpha\in(0,1)$. The objective is selecting $\sigma_{0}$ so that 
\begin{equation}
\mathbb{P}\big[\mu^{0}({\bm x})\in C_{\alpha}\{\sigma_{0},{\bf y})\}\big]\approx\alpha,\label{eq:coverProp}
\end{equation}
for all $\alpha$, where $\mathbb{P}$ is the objective probability measure, based on
the data-generating process, and $\mu^{0}({\bm x})$ is the true conditional quantile. 

Let $\hat{\bm{\beta}}$ be the MAP estimate of the regression coefficients and define the covariance matrices ${{\bf V}}=(\bm{\mathcal I}+{\bf S}^{\bm{\gamma}})^{-1}$ and $\tilde{{\bf V}}=(\bm{\mathcal I}\bm{\Sigma}_{\nabla}^{-1}\bm{\mathcal I}+{\bf S}^{\bm{\gamma}})^{-1}$, where $\bm{\mathcal I}$ is the Hessian of the unpenalised loss and $\bm{\Sigma}_{\nabla}=\text{cov}[\nabla_{\bm \beta}\text{lo}\{\mu({\bm x}), \sigma({\bm x})\}|_{\bm \beta = \hat{\bm \beta}}]$ w.r.t. $\mathbb{P}$. We select $\sigma_{0}$ by minimizing
\begin{equation} \label{eq:IKLestim}
\hat{\text{IKL}}(\sigma_0) = n^{-1}\sum_{i=1}^n \bigg[ \frac{\hat{\tilde{v}}({\bm x}_i)}{v({\bm x}_i)} + \log\frac{v({\bm x}_i)}{\hat{\tilde{v}}({\bm x}_i)} \bigg]^\zeta, 
\end{equation}
which is an estimate of the Integrated Kullback\textendash Leibler (IKL)
divergence, that is
\[
\text{IKL}(\sigma_{0})=\int\text{KL}\big[\text{N}\{\mu({\bm x}),\tilde{v}({\bm x})\}, \text{N}\{\mu({\bm x}),v({\bm x})\}\big]^{\zeta}p({\bm x})d{\bm x}\propto\int\bigg\{\frac{\tilde{v}({\bm x})}{v({\bm x})}+\log\frac{v({\bm x})}{\tilde{v}({\bm x})}\bigg\}^{\zeta}p({\bm x})d{\bm x},
\]
where $v({\bm x})={\bf x}\ts{{\bf V}}{\bf x}$ and $\tilde{v}({\bm x})={\bf x}\ts\tilde{{\bf V}}{\bf x}$ are the posterior variances of $\mu(\bm x)$ under the two alternative covariance matrices for $\bm \beta$. $\zeta$ is a positive constant and $\text{N}(\cdot, \cdot)$ indicates the normal distribution. In (\ref{eq:IKLestim}) $\tilde{{\bf V}}$ is replaced by $\hat{\tilde{{\bf V}}} = (\bm{\mathcal I}\hat{\bm{\Sigma}}_{\nabla}^{-1}\bm{\mathcal I}+{\bf S}^{\bm{\gamma}})^{-1}$, where $\hat{\bm{\Sigma}}_{\nabla}$ is the regularised estimator proposed in SM \ref{sec:regSand}. 
Objective (\ref{eq:IKLestim}) is deterministic and one dimensional, hence it can be efficiently minimised using standard root-finding methods, such as bisection. In our experience, the objective is generally smooth and it has a unique minimum. Decreasing (increasing) $\sigma_0$ leads to wigglier (smoother) fits and increases (decreases) $\tilde{v}({\bf x})/v({\bf x})$.

Our approach is motivated as follows.  Notice that the Gibbs posterior can be seen as a posterior based on the misspecified parametric likelihood, formed by the ELF density. \cite{muller2013risk} proves that, while the posterior of misspecified models is asymptotically Gaussian with mean vector $\hat{\bm{\beta}}$ and covariance matrix ${{\bf V}}$, this posterior is asymptotically worse, in
terms of frequentist risk, than a posterior having `sandwich' covariance $\tilde{{\bf V}}$. Given that credible intervals can be derived within a decision-theoretic
framework by adopting an appropriate loss function (see for instance \cite{robert2007bayesian}, Section 5.5.3), and that M{\"u}ller's work considers general losses, it is clear that the intervals based on $\tilde{{\bf V}}$ should have better asymptotic frequentist properties. Hence, we minimise (\ref{eq:IKLestim}) w.r.t. $\sigma_0$ so that the marginal posterior distribution of $\mu({\bm x})$, which is based on ${{\bf V}}$, is as close as possible to that based on $\tilde{{\bf V}}$. We choose $\zeta=1/2$, and in general we suggest setting $0<\zeta<1$, to make $\text{IKL}$ more robust to the occasional large discrepancies between $\tilde{v}({\bm x})$ and $v({\bm x})$, which can occur where the design points are sparse. Notice that the KL divergence is asymmetric, hence IKL is not invariant to the ordering of $v({\bm x})$ and $\tilde{v}({\bm x})$. We prefer the ordering used here, because it penalises under-coverage ($\tilde{v}({\bm x})>v({\bm x})$) more than over-coverage ($\tilde{v}({\bm x})<v({\bm x})$).

Recall that we are minimising the discrepancy between the marginal posteriors for $\mu({\bf x})$ based on ${{\bf V}}$ and $\tilde{{\bf V}}$ because the latter offers better asymptotic frequentist properties. However, \cite{muller2013risk} clarifies that adopting $\tilde{{\bf V}}$ does not lead to a posterior achieving the lowest possible asymptotic risk. Hence, it is reasonable to expect that intervals based on the true marginal variance of $\hat{\mu}({\bm x})={\bf x}\ts\hat{\bm{\beta}}$ under $\mathbb{P}$ would offer better coverage, especially in small samples. To provide such an alternative to the sandwich estimator, SM \ref{app:bootCal} proposes a bootstrapping procedure for estimating a different $\text{IKL}$ loss, where an estimate of $\text{var}\{\hat{\mu}({\bm x})\}$ under $\mathbb{P}$ substitutes $\tilde{v}({\bf x})$. 

Motivated by the non-parametric spline-based context considered here, we have chosen to explicitly calibrate the posterior of $\mu({\bm x})$, rather than that of ${\bm \beta}$. However, the following argument suggests that the calibration procedure proposed here should lead to approximately calibrated intervals for ${\bm \beta}$ too. Assume that ${\bf X}$ is a $n \times d$ full rank matrix  and suppose that minimising the IKL loss leads to a value of $\sigma_0$ such that ${\bf x}_i\ts{\bf V} {\bf x}_i = {\bf x}_i\ts\tilde{\bf V}{\bf x}_i$, for $i=1,\dots,n$. Then the properties of Kronecker products lead to ${\bf X} \otimes_r {\bf X} \, \text{vec}({\bf V}) = {\bf X} \otimes_r {\bf X}\, \text{vec}(\tilde{\bf V})$, where $\otimes_r$ indicates the row-wise Kronecker product, such that the $i$-th row of ${\bf X} \otimes_r {\bf X}$ is ${\bf x}_i \otimes {\bf x}_i$. Given that ${\bf X}$ is of full rank $d$, then the symmetry of ${\bf V}$ and the fact that $\text{rank}({\bf X} \otimes_r {\bf X}) \geq d(d+1)/2$ imply that ${\bf V} = \tilde{\bf V}$. 

\cite{muller2013risk} proves that $\tilde{{\bf V}}$ achieves a lower asymptotic frequentist risk than ${\bf V}$ in a setting where the prior is increasingly dominated by the likelihood as $n$ increases, so that the asymptotic variance of $\hat{\bm \beta}$ does not depend on the prior. In a penalised cubic regression spline context such dominance occurs when the spline basis dimension $d=O(n^\alpha)$ for $\alpha<1/5$. This includes the regime considered by \cite{kauermann2009some}, when demonstrating the statistical validity of GAM inference based on Laplace approximate marginal likelihood smoothing parameter estimation. However, other regimes are also possible \citep[e.g.][]{claeskens2009asymptotic}, and the question of relative risk is then open. 

\section{Simulated examples} \label{sec:simulExamples}

Before applying the proposed quantile regression framework to load forecasting, we test it on two simulated examples. In particular, in Section \ref{sec:addExample} we fit an additive quantile model to homoscedastic data with $\sigma({\bm x}) = \sigma_0$, while in Section \ref{sec:hetero} we consider heteroscedastic data, where adequate interval coverage can be achieved only by letting the learning rate and the ELF loss bandwidth vary with $\bm x$. 

\subsection{An additive example} \label{sec:addExample}

Consider the following additive model
\begin{equation} \label{eq:addModel}
y_i = x_i + x_i^2 - z_i + 2\text{sin}(z_i) + 0.1v_i^3 + 3\text{cos}(v_i) + e_i,
\end{equation}
where $e_i \sim \text{gamma}(3, 1)$, $x_i \sim \text{unif}(-4, 4)$, $z_i \sim \text{unif}(-8, 8)$ and $v_i \sim \text{unif}(-4, 4)$. We aim at estimating the conditional quantiles corresponding to $\tau = 0.01, 0.05, 0.5, 0.95$ and $0.99$. Hence, we fit an additive quantile regression model for each $\tau$, using the ELF loss. We determine the loss bandwidth as in Section \ref{sec:approxLapl}, where the $\mathbb{E}(y|\bm x) = \alpha(\bm x)$ and $\text{var}(y) = \kappa^2$ are estimated using a Gaussian GAM. Fitting this model has a negligible impact on the computational cost, as it has to be done only once, before calibrating $\sigma_0$. We select $\sigma_0$ either by minimising LAML w.r.t. both $\sigma_0$ and $\bm \gamma$, or by the calibration method of Section \ref{sec:tuningLearn}. We consider two versions of the latter, one based on the sandwich covariance matrix $\tilde{\bf V}$, the other on the bootstrapping routine of SM \ref{app:bootCal}. We also include quantile regression by gradient boosting, as implemented in the \verb|mboost| R package \citep{hothorn2010model}.

We simulate 100 datasets from (\ref{eq:addModel}), using either $n=10^3$ or $n=10^4$, and we fit an additive model for each $\tau$ using each approach. The fitted model includes a smooth effect for each covariate, based on cubic regression splines bases of rank 30. 
The boosting approach requires also selecting the degrees of freedom of each effect, which we set to 6. The number of boosting iterations was selected by minimising the out-of-bag empirical risk, based on the pinball loss and on 100 bootstrap datasets. The boosting step size was equal to $0.1$ when $n=10^3$ and $1$ when $n=10^4$. To select $\sigma_0$ by posterior calibration, we minimised the estimated ${\text{IKL}}$ loss using Brent's method \citep{brent2013algorithms}. The bootstrap version of the procedure was based on 100 bootstrap samples. 

%
\begin{table}
\centering
\begin{tabular}{rlllll}
  \hline
 $\tau$ & 0.01 & 0.05 & 0.5 & 0.95 & 0.99 \\ 
  \hline
CAL Boot & \textbf{0.273}(0.04) & \textbf{0.237}(0.03) & 0.309(0.04) & 0.722(0.1) & 1.104(0.23) \\ 
  CAL Sand & 0.274(0.04) & \textbf{0.237}(0.03) & \textbf{0.303}(0.04) & \textbf{0.717}(0.1) & \textbf{1.097}(0.22) \\ 
  LAML & 0.284(0.03) & 0.249(0.03) & 0.307(0.04) & 0.926(0.14) & 1.284(0.16) \\ 
  BOOST & 0.369(0.08) & 0.272(0.04) & 0.321(0.05) & 0.814(0.11) & 1.674(0.38) \\ 
  \hline
  CAL Boot & 0.102(0.01) & 0.093(0.01) & 0.125(0.01) & 0.314(0.03) & 0.543(0.08) \\ 
  CAL Sand & \textbf{0.1}(0.01) & \textbf{0.092}(0.01) & \textbf{0.123}(0.01) & \textbf{0.307}(0.03) & \textbf{0.535}(0.08) \\ 
  LAML & 0.113(0.01) & 0.1(0.01) & 0.126(0.01) & 0.406(0.04) & 0.897(0.09) \\ 
  BOOST & 0.107(0.01) & 0.094(0.01) & \textbf{0.123}(0.01) & \textbf{0.307}(0.04) & 0.561(0.08) \\ 
   \hline
\end{tabular} 
\caption{Additive example: mean(std. dev.) of the RMSEs between true and estimated quantiles, for each quantile and method for $n=10^3$ (top rows) and $n=10^4$ (bottom rows). The lowest RMSE(s), for each sample size and quantile, is \textbf{bold}.}
\label{tab:addExLoss} 
\end{table} 

Table \ref{tab:addExLoss} reports the average RMSE ($[n^{-1}\sum_i \{\hat{\mu}({\bm x}_i) - \mu^0({\bm x}_i)\}^2]^{1/2}$). When $n=10^3$, the RMSEs achieved by the two calibration approaches are strictly lower than those achieved by boosting. The performance of boosting is closer to that of our method when $n=10^4$, which suggests that, for fixed model complexity, the advantage of using a smooth loss is inversely proportional to the amount of data available. LAML selection of $\sigma_0$ leads to worse results relative to our method, especially for the highest quantiles. Hence, in this example the calibration procedure based on $\tilde{\bf V}$ leads to quantile estimates that are as accurate as those produced by bootstrap based calibration, and more accurate than those obtained by boosting, and much cheaper to compute. On an Intel 2.50GHz CPU, calibrating $\sigma_0$ using $\tilde{\bf V}$ takes around $1.4$s for $n=10^3$ and $11$s for $n=10^4$, when $\tau = 0.5$. Under bootstrapping, the calibration takes $13$s and $126$s, while selecting the number of boosting steps takes around $134$s (2000 steps) and $150$s (550 steps). However, selecting the number of boosting step takes much longer for $\tau=0.01$: $0.6$h ($3\times10^4$ steps) and $0.5$h ($6000$ steps). For the same quantile $\tilde{\bf V}$-based calibration takes $3.5$s and $20$s, while the bootstrap version takes $45$s and $350$s. In practice \verb|mboost|'s computing times are longer, as the cross-validation needs to run beyond the optimal step size, which is not known in advance. 

\begin{figure} 
\centering
\includegraphics[scale=0.44]{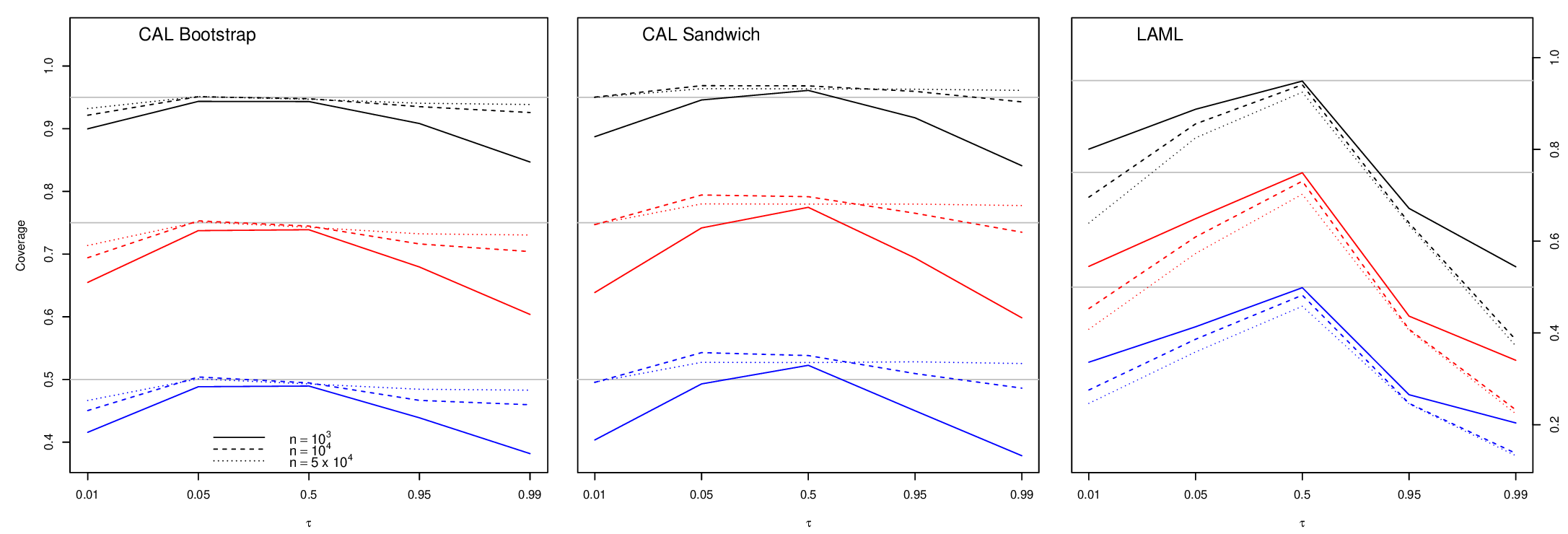} 
\caption{Additive example: empirical coverage achieved by selecting $\sigma_0$ by calibration (using either the sandwich covariance $\tilde{\bf V}$ or bootstrapping) or by LAML minimisation, for each $\tau$ and sample size $n$. The horizontal lines indicate the nominal coverage levels.}
\label{fig:cal3D}
\end{figure}

Figure \ref{fig:cal3D} shows the empirical coverage, at $95$, $75$ and $50\%$ level, achieved by the credible intervals for $\mu({\bm x})$, using calibration or LAML to select $\sigma_0$. 
We do not check the coverage achieved by gradient boosting, because analytic formulas are unavailable and confidence intervals must be obtained by bootstrapping with each bootstrap replicate as computationally expensive as the original fit. Notice that the coverage achieved using LAML for selecting $\sigma_0$ is well below nominal levels for most quantiles, and does not improve as $n$ increases. Instead, under bootstrap or sandwich-based calibration, coverage improves with $n$. In particular, the bootstrap version attains close to nominal coverage for most quantiles when $n=10^4$, and almost perfect coverage for $n=5\times 10^4$. The sandwich version shows a similar improvement but, for large $n$, it selects larger values of $\sigma_0$ than the bootstrap version, which leads to slightly wider intervals. The bootstrap-based calibration selects a lower value of $\sigma_0$ because it minimises a version of the IKL criterion which takes into account the bias of the fit (see SM \ref{app:bootCal} for details), which is directly proportional to $\sigma_0$.

\subsection{An heteroscedastic example} \label{sec:hetero}

Here we consider the following heteroscedastic data generating process
\begin{equation} \label{eq:heteroModel}
y_i \sim \text{SkewNorm}\{\xi(x_i), \omega(x_i), \theta\}, \;\;\; \xi(x_i) = x_i + x_i^2, \;\;\;  \omega(x_i) = 1.5 + \text{sin}(2x_i), \;\;\; \theta = 4,
\end{equation}
where $\xi$, $\omega$, and $\theta$ are the location, scale and shape parameters of the Skew-Normal distribution \citep{azzalini1985class}, while $x_i \sim \text{unif}(-4, 4)$. We simulate $n = 2000$ data points from (\ref{eq:heteroModel}) and we fit quantile models for the median and the 95th percentile. In particular, we consider a simplified model where $\sigma({x})=\sigma_0$, and a full model where the learning rate and the loss bandwidth vary with $x$. As explained in Section \ref{sec:approxLapl}, this requires fitting a location-scale model to estimate the conditional mean and variance of $y$. We use a Gaussian GAM where both the mean and the variance of $y$ depend on $x$, which we fit using the methods of \cite{wood2016smoothing}. We model the quantiles using cubic regression spline bases of rank 30, and we adopt the same basis for the mean and variance of the Gaussian GAM.  

\begin{figure} 
\centering
\includegraphics[scale=0.53]{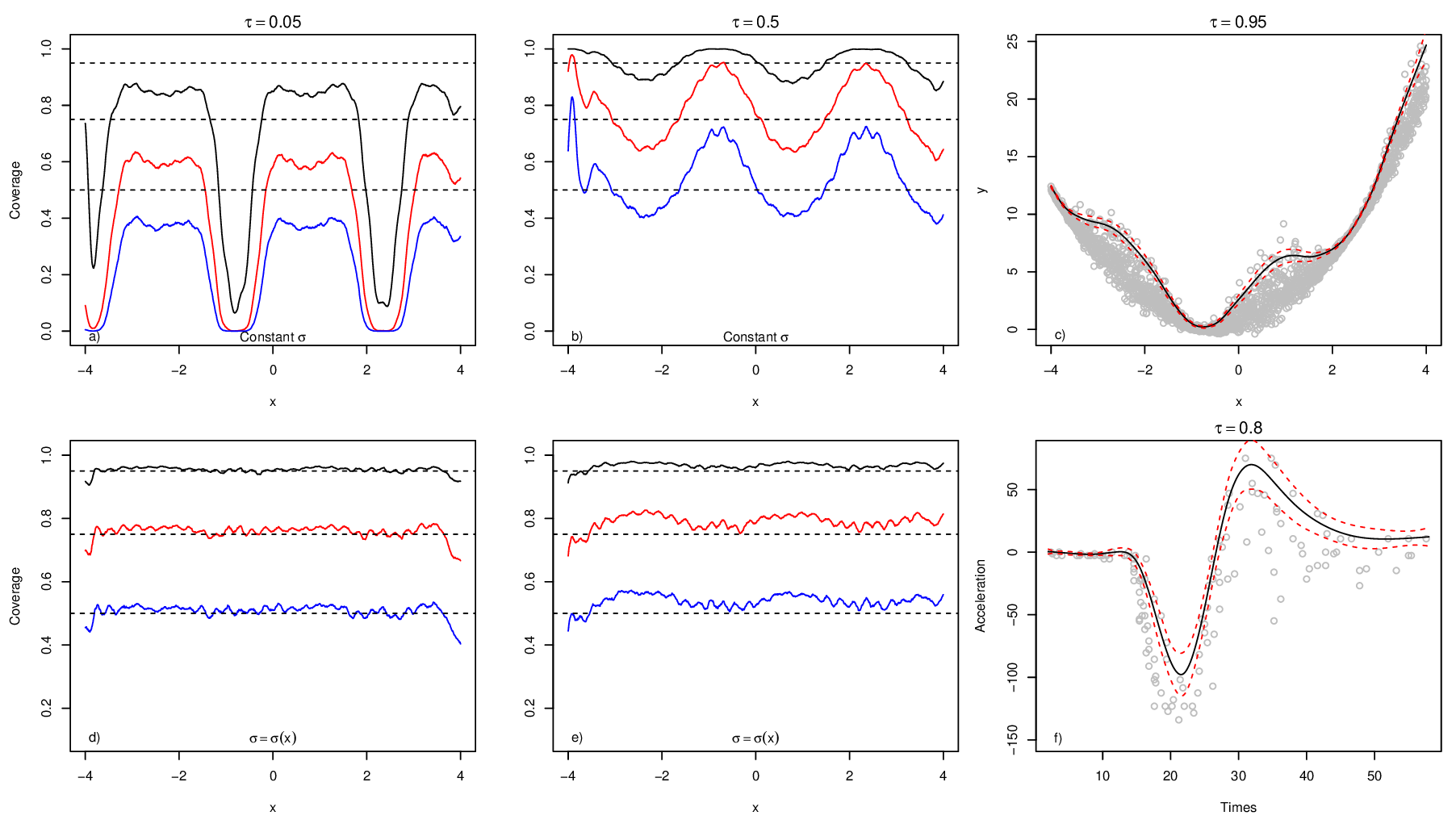} 
\caption{$a$, $b$, $d$ and $e$: nominal (dashed) vs empirical (solid) coverage at $50$, $75$ and $95\%$ level, using a simplified ($a$ and $b$) or full quantile model ($d$ and $e$). $c$: full fit using $\tau = 0.95$ with data from model (\ref{eq:heteroModel}). $f$: full fit for quantile $\tau = 0.8$ using the motorcycle dataset.}
\label{fig:heteroPlots}
\end{figure}

The first two columns in Figure \ref{fig:heteroPlots} compare nominal and empirical coverage of credible intervals for $\mu(x)$, obtained by fitting 5000 datasets simulated from (\ref{eq:heteroModel}) using $\tilde{\bf V}$-based calibration. Clearly, the simplified model provides unreliable intervals even at the median, while the intervals of the full model are much closer to nominal levels. Figure \ref{fig:heteroPlots} also shows a fit for quantile $\tau = 0.8$ of the motorcycle dataset \citep{silverman1985some}. This was obtained using an adaptive P-spline basis of rank 20 to model the quantile curve and the mean of the Gaussian GAM, and a thin-plate spline basis of rank 10 for the conditional variance.

\section{Probabilistic load forecasting} \label{sec:gefcom}

GAMs have proved highly successful at EDF, because they can capture the complex relations existing between electricity load and several meteorological, economic and social factors, while retaining a high degree of interpretability, which is critically important during exceptional events, when manual intervention might be required. However,  the cost structure relevant to an electrical utility implies that only certain conditional quantile estimates are of high operational interest. This, and the difficulty of finding a parametric model for the load distribution that holds at several levels of aggregation, makes of semi-parametric quantile regression an attractive alternative to traditional GAMs.

In this section we consider the three datasets shown in Figure \ref{fig:loadScores}. The first is the dataset used in the load forecasting track of the Global Energy Competition 2014 (GEFCom2014). This covers the period between January 2005 and December 2011, and it includes half-hourly load consumption and temperatures. 
The other two datasets contain half-hourly electricity demand from the UK and French grids. The first covers the period between January 2011 and June 2016, the second between January 2013 and December 2017. We integrate them with hourly temperature data from the National Centers for Environmental Information (NCEI) and M{\'e}t{\'e}o France. We aim at predicting 20 conditional quantiles, equally spaced between $\tau = 0.05$ and $\tau = 0.95$. Given that load consumption is strongly dependent on the time of the day, it is common practice \cite[e.g.][]{gaillard2016additive} to fit a different model for each half-hour. To limit the computational burden, here we consider only the period between between 11:30 and 12am. We use the period 2005-09 of the GEFCom2014 data for training, the last two years for testing. Similarly, we test each method on the last 24 and 12 months of, respectively, the UK and the French data set.

\begin{figure} 
\centering
\includegraphics[scale=0.6]{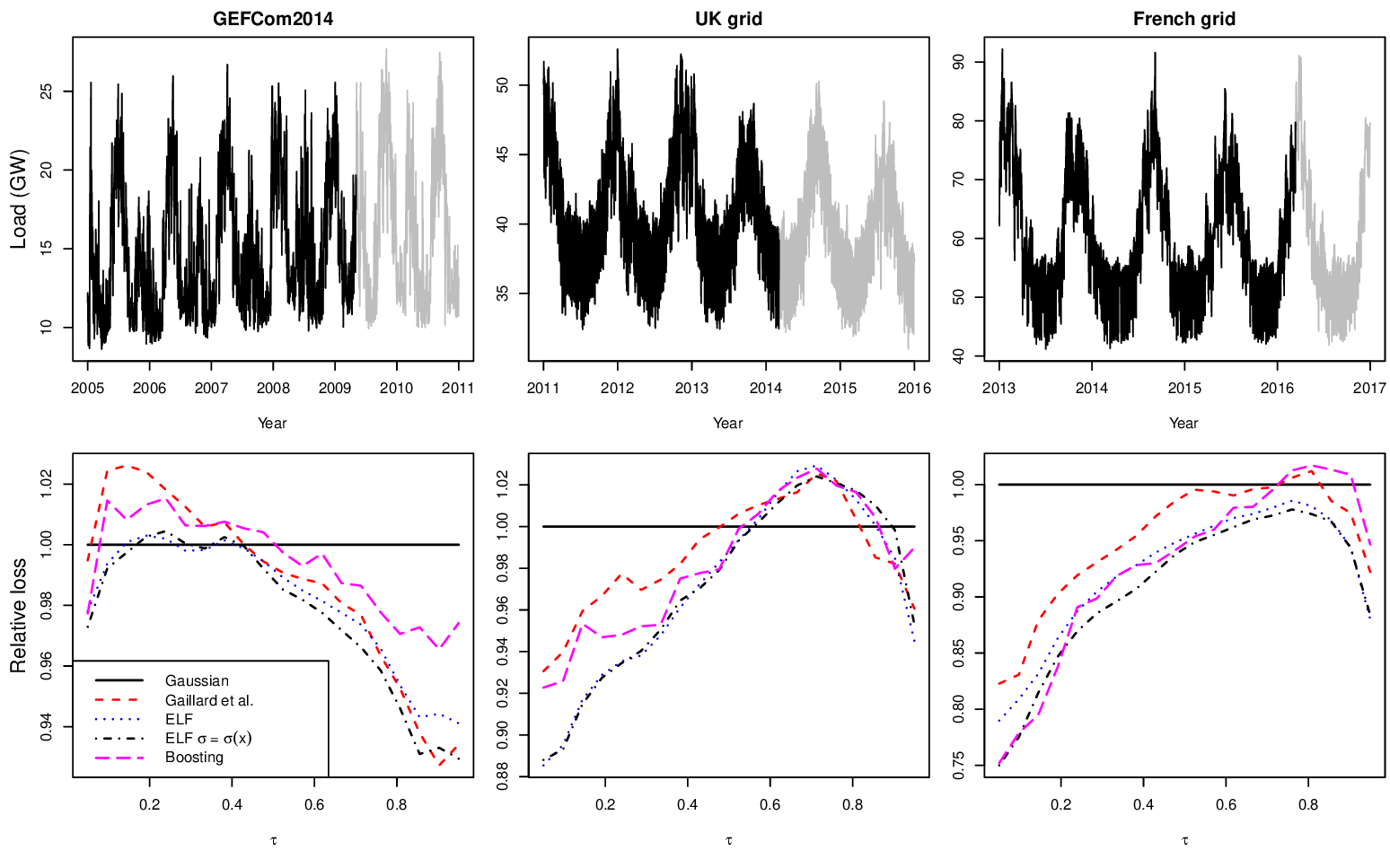} 
\caption{Top row: daily electricity loads, observed between 11:30 and 12am, from the GEFCom2014 challenge, the UK and the French grid, divided between training (black) and testing (grey) sets. Bottom row: relative pinball losses for each $\tau$ and method.}
\label{fig:loadScores}
\end{figure}

\cite{gaillard2016additive} proposed a quantile regression method which ranked 1st on both the load and the price forecasting track of GEFCom2014. This is a two-step procedure, which was partially motivated by the lack of reliable software for fitting additive quantile models. Very briefly, their method firstly fits a Gaussian additive model to model mean load and a second one to model the variance of the residuals from the first fit. Then, for each quantile, they fit a linear quantile regression to model the load, using the effects estimated by the Gaussian fits as covariates. We compare their method to our proposal and to gradient boosting, using the set of covariates proposed by \cite{gaillard2016additive}: hourly temperatures ($T_t$); smoothed temperature ($T^s_t$), obtained using $T_{t}^s = \alpha T_{t-1}^s + (1-\alpha)T_t$ with $\alpha = 0.95$; a cyclic variable indicating the position within the year ($S_t$); a factor variable indicating the day of the week ($D_t$); a sequential index representing time ($t$); the observed load at the same time of the previous day ($L_{t-48}$). Hence, the model for quantile $\tau$ is
$$
q_\tau(L_t) = \psi_{D_t} + f_1^{15,6}(T_t) + f_2^{15,6}(T^s_t) + f_3^{{20},8}(S_t) + f_4^{{10},4}(L_{t-48}) + f_5^{4,4}(t),
$$
where all smooth effects are based on cubic regression splines and, for instance, $f_1^{15,6}$ indicates that we used a basis of rank 15 and 6 degrees of freedom (the latter need to be chosen only under boosting). $\psi_{D_t}$ is a fixed effect, whose value depends on $D_t$. 

To let the learning rate and the ELF loss smoothness depend on the covariates, and to implement the method of \cite{gaillard2016additive}, we need also a variance model for the preliminary location-scale Gaussian GAM fit. We use 
$$
\log{\text{var}(\text{Load}_t)} = \tilde{\psi}_{D_t} + \tilde{f}_1^{10}(T^s_t) + \tilde{f}_2^{10}(S_t), 
$$
where the degrees of freedom do not need to be chosen. We consider two version of ELF-based quantile regression, one where $\sigma({\bm x}) = \sigma_0$ and another where $\sigma$ depends on the covariates, using the location-scale model just described. In both cases, we calibrate $\sigma_0$ using $\tilde{\bf V}$-based calibration, because for this application it gives exactly the same results as bootstrapping, but at a much lower computational cost. For boosting, we use 100 bootstrap replicates to select the number of steps, while the step-size is equal to 0.1. Having tuned $\sigma_0$ and the number of boosting steps on the training sets, we forecast electricity load one week ahead, on rolling basis, using the test sets. See SM \ref{app:electrDetails} for further details. 

The bottom plots in Figure \ref{fig:loadScores} show, for each $\tau$ and dataset, the pinball losses incurred on the testing sets, divided by the pinball loss of a Gaussian additive fit. Notice that both versions of ELF-based quantile regression do better than a Gaussian fit for most  quantiles and that they are more accurate than the alternative quantile regression methods on each data set. Remarkably, on the GEFCom2014 data set, the proposed approach is superior to that of \cite{gaillard2016additive}, which was developed in the context of that challenge. Letting the learning rate $\sigma$ depend on $\bm x$ leads to slightly improved performance on the GEFCom2014 and French data, but not on the UK data. On this data all quantile regression methods struggle to beat the Gaussian GAM around $\tau = 0.7$, which might be attributable to temperature anomaly that characterised the winter 2015/16 in the UK.

 
Regarding computing time, $\tilde{\bf V}$-based calibration takes around 2s on the GEFCom2014 training data. This includes the time needed to fit a Gaussian GAM, when $\sigma(\bm x) = \sigma_0$. Fitting a location-scale Gaussian GAM takes around 20s, but this needs to be done only once, and the output is used for all quantiles. For gradient boosting, the number of steps which minimises the cross-validated risk criterion varies widely across quantiles. In fact, for $\tau\approx0.95$ more than $5400$ steps are needed, while for $\tau\approx 0.4$ it is sufficient to use $550$ steps. Using \verb|mboost|, cross-validation takes around 800s in the first case and 80s in the second. In practice, the optimal number of steps is not known in advance, hence it is necessary to cross-validate beyond the optimal point. We considered up to $1\times10^4$ steps for all quantiles, which translates to roughly $1450$s. In the simulation setting considered here, where the smoothing parameters and regression coefficients are updated every week, our method has the further advantage that it is possible to initialise using the latest fit.


\section{Conclusion} \label{sec:conclusion}

Motivated by the need for more flexible GAM models at EDF, this work provides a computationally stable and efficient framework for fitting additive quantile regression models. The learning rate and all smoothing parameters are selected automatically and calibrated uncertainty estimates are provided at no additional computational cost. This was achieved by coupling the general Bayesian belief updating framework of \cite{bissiri2016general} with a statistically advantageous smooth generalisation of the `pinball' loss and a novel efficient calibration procedure based on a Bayesian sandwich covariance estimator. Adopting the smooth ELF loss was key to computationally efficiency, because it allowed us to exploit the fast stable method of \cite{wood2016smoothing}, when selecting the smoothing parameter by marginal loss minimisation. While working with a smooth loss is not novel in itself \citep[see e.g.][]{oh2012fast, yue2011bayesian}, the key contribution here is having selected the loss smoothness so as to minimise the asymptotic MSE of the estimated regression coefficients.

In practical terms the electricity load forecasting example demonstrates the practical utility of the proposed approach. Performance is better than that of gradient boosting, and at much lower computational cost (even more so when uncertainty estimates are required). Similarly, the methods are competitive with the ad hoc approach of \cite{gaillard2016additive}, on the very example that method was designed for.

\bibliographystyle{chicago}
\bibliography{biblio.bib}

\newpage

\begin{center}
{\large\bf Supplementary material to \\ ``Fast calibrated additive quantile regression''}
\end{center}

\renewcommand{\appendixpagename}{}
\begin{appendices}

\renewcommand{\theequation}{S\arabic{equation}}

\setcounter{equation}{0}

\section{Details regarding the ELF loss} \label{app:densDetails}

\subsection{The Extended Log-F (ELF) density} \label{sec:exLogDens}

This section explains how the new ELF density relates to the log-F density of \cite{jones2008class}. Consider the family of densities with exponential tails described
by \cite{jones2008class}
\[
p_{G}(y)=K_{G}^{-1}(\alpha,\beta)\exp\big\{\alpha y-(\alpha+\beta)G^{[2]}(y)\big\},
\]
where $\alpha,\beta>0$, $K_{G}(\alpha,\beta)$ is a normalising constant,
$
G^{[2]}(y)=\int_{-\infty}^{y}\int_{-\infty}^{t}g(z)dzdt=\int_{-\infty}^{y}G(t)dt,
$
while $g(z)$ and $G(z)$ are, respectively, the p.d.f and c.d.f. of a (fictitious) r.v. $z$.  Importantly, this family nests the AL distribution, which is recovered by choosing $g(z)$ to be the Dirac delta and by imposing $\alpha=1-\tau$, $\beta=\tau$, with $0<\tau<1$. Adding location and scale parameters is trivial. 

We substitute the Dirac delta with a smoother p.d.f.. This is achieved by choosing $G(z)=G(z|\lambda)=\Phi(z|0,\lambda)=\exp(z/\lambda)/\{1+\exp(z/\lambda)\}$,
which is the c.d.f. of a logistic random variable centered at zero
and with scale $\lambda$. Notice that, as $\lambda \rightarrow 0$,
we have that $\Phi(z|\lambda)\rightarrow \mathbbm{1}(z>0)$ which is the c.d.f. corresponding to the Dirac delta density. With this choice we have
$
\Phi^{[2]}(y|\lambda)=\lambda\log\{1+\exp(y/\lambda)\}
$,
which leads to
\begin{equation}
p_{F}(y)=\frac{e^{(1-\tau) y}(1+e^{\frac{y}{\lambda}})^{-\lambda}}{\lambda \text{Beta}\big[\lambda(1-\tau),\lambda\tau\big]}.\label{eq:plainDens}
\end{equation}
where $\text{Beta}(\cdot, \cdot)$ is the beta function. The location-scale extension of (\ref{eq:plainDens}) is simply
\begin{equation}
\tilde{p}_{F}(y)=\frac{1}{\sigma}p_{F}\big\{(y-\mu)/\sigma \big\}=\frac{e^{(1-\tau)\frac{y-\mu}{\sigma}}(1+e^{\frac{y-\mu}{\lambda\sigma}})^{-\lambda}}{\lambda\sigma\text{Beta}\big[\lambda(1-\tau),\lambda\tau\big]},
\end{equation}
Imposing $\lambda = 1$ leads to the log-F density of \cite{jones2008class}. Sections \ref{sec:ELFderiv} and \ref{sec:ELFsatlik} contain additional details regarding the new density. Most of these are necessary to fit semi-parametric additive models using the methods described in Sections \ref{sec:fixedSigma} and \ref{sec:tuningLearn}.

\subsection{Derivation of optimal loss smoothness} \label{app:OptSmoothLoss}

Before deriving the AMSE under the ELF loss, we need to put forward some definitions and to prove some preliminary results. Our proofs follow closely those of \cite{kaplan2017smoothed}, but there are some differences in the type of smooth loss we adopt and in the assumptions we make, hence we can not simply refer to their results.

Let ${\bm w}_i = \nabla_{\bm \beta} \tilde{\rho}\{{y}_i-\mu({\bm x}_i)\}$ be the gradient of the $i$-th component of the ELF loss. Also, define $h = \lambda \sigma$, $u = y - \mu(\bm x)$, its p.d.f $f_{u|{\bm x}}(u|{\bm x})$, c.d.f. $F_{u|{\bm x}}(u|{\bm x})$ and make the following assumptions:
\begin{enumerate}[label=\emph{\alph*})]
\item \label{iidX} the pairs $\{{\bm x}_i, {y}_i\}$, with $i=1,\dots,n$, are i.i.d.; 
\item \label{quantileAss} $\mathbb{P}(u_{i} < 0 | {\bm x}_i) = \tau$ for almost all $\bm x \in \mathcal{X}$, the support of $\bm x$, and $i \in \{1, \dots, n\}$;
\item \label{fboundAss} for almost all ${\bm x} \in \mathcal{X}$ and $u$ in a neighborhood of zero, $f_{u|{\bm x}}(u|{\bm x})$ is three times continuously differentiable and there exists a bounded function $C({\bm x})$ such that $|f_{u | {\bm x}}^{(s)}( u | {\bm x})| \leq C({\bm x})$ for $s \in \{1,2,3\}$,  and $\mathbb{E}\left\{ C({\bm x}) \|{\bf x}\|^2 \right\} < \infty$;
\item \label{hrateAss} $h = O(n^{-\psi})$ with $1/5 < \psi < 2/5$;
\item \label{identif} ${\bm \beta}_0$ is the unique solution of $\mathbb{E}[{\bf x}_i\{\mathbb{I}(y_i-{\bf x}\ts\bm \beta)>0) - 1 + \tau\}]={\bf 0}$ on ${\bm \beta} \in \mathcal{B}$, the parameter space;
\item \label{nonSing} ${\bm \Sigma}_f = \mathbb{E}\{{\bf x}_i{\bf x}_i \ts f_{u|{\bm x}}(0|{\bm x}_i)\}$ and $\mathbb{E}({\bf x}_i{\bf x}_i\ts)$ are non-singular.
\end{enumerate}
We will also need the following facts:
\begin{enumerate}[label=\emph{\alph*})]\setcounter{enumi}{6}
\item \label{logiSym} the logistic p.d.f. $\phi$ is symmetric around zero;
\item \label{logiProp} $\int_{-\infty}^{+\infty} |v^4 \phi(v)| dv < \infty$, $\int_{-\infty}^{+\infty} v^2\Phi(v)\phi(v) dv < \infty$ and $\int_{-\infty}^{+\infty} v \Phi(v) \phi(v) dv = 1/2$.
\end{enumerate}

In a regression context, equating to zero the first derivative of the ELF loss w.r.t. $\bm \beta$ leads to 
\begin{equation} \label{eq:kernSmo}
{\bm m}_n({\bm \beta}) = \frac{1}{\sqrt n}\sum_{i=1}^n \bm w_i({\bm \beta}) = \frac{1}{\sqrt n} \sum_{i=1}^n \bigg ( {\bf x}_i \bigg[ \Phi\bigg \{\frac{y_i-{\bf x}\ts{\bm \beta}}{h} \bigg\} - 1 + \tau \bigg] \bigg) = {\bf 0}.
\end{equation}
We start by deriving asymptotic expressions for $\mathbb{E}({\bm w}_i)$ and $\mathbb{E}({\bm w}_i{\bm w}_i\ts)$.

\begin{Lemma}
Under assumptions \ref{quantileAss} and \ref{fboundAss}, and for every $i$, we have that

\begin{equation}
\mathbb{E}({\bm w}_{i}) = \frac{1}{6}h^2\pi^2\mathbb{E}\left\{ f'_{u | {\bm x}}(0 | {\bm x}_{i}){\bf x}_{i}\right\} + O(h^4).
\label{E(Wi)}
\end{equation}
\end{Lemma}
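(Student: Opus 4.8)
The plan is to condition on $\bm x_i$ and reduce the claim to an asymptotic expansion in $h$ of the scalar quantity $g(\bm x) := \mathbb{E}\{\Phi(u/h) - (1-\tau) \mid \bm x\}$, since by (\ref{eq:kernSmo}) we have $\bm w_i = \bm x_i[\Phi(u_i/h) - 1 + \tau]$, whence $\mathbb{E}(\bm w_i) = \mathbb{E}\{\bm x_i\, g(\bm x_i)\}$ by the tower property. A naive change of variables $u = hv$ in $\int \Phi(u/h)\,f_{u|\bm x}(u|\bm x)\,du$ fails, because $\Phi(v)$ tends to the nonzero limits $\tau$ and $\tau-1$ at $\pm\infty$, so the integrand does not decay. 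The natural fix is integration by parts, which trades the slowly varying $\Phi$ for the rapidly decaying logistic density $\phi$.

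First I would integrate by parts with antiderivative $F_{u|\bm x}$, noting that the boundary term $[\Phi(u/h)F_{u|\bm x}(u|\bm x)]_{-\infty}^{+\infty}$ equals $1$, and then substitute $v = u/h$, to obtain
\[
\mathbb{E}\{\Phi(u/h)\mid\bm x\} = 1 - \int_{-\infty}^{+\infty}\phi(v)\,F_{u|\bm x}(hv|\bm x)\,dv .
\]
Next I would Taylor expand $F_{u|\bm x}(hv|\bm x)$ about the origin, using $F_{u|\bm x}(0|\bm x) = \tau$ from assumption \ref{quantileAss} together with $F'_{u|\bm x} = f_{u|\bm x}$ and $F''_{u|\bm x} = f'_{u|\bm x}$, and integrate term by term against $\phi$. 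The logistic moments do the rest: $\int \phi = 1$ contributes $\tau$, which with the boundary $1$ cancels the $-(1-\tau)$; the first and third moments vanish by symmetry (fact \ref{logiSym}), killing the $O(h)$ and $O(h^3)$ contributions; and the second moment equals the variance of the standard logistic, $\int v^2\phi(v)\,dv = \pi^2/3$, so the term $\tfrac12 f'_{u|\bm x}(0|\bm x)h^2 v^2$ integrates to $\tfrac{\pi^2}{6}h^2 f'_{u|\bm x}(0|\bm x)$. Collecting terms gives $g(\bm x) = \pm\tfrac16 h^2\pi^2 f'_{u|\bm x}(0|\bm x) + O(h^4)$; multiplying by $\bm x$ and taking the outer expectation yields (\ref{E(Wi)}). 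The overall sign is dictated by the sign convention adopted for $\bm w_i$ in (\ref{eq:kernSmo}) and is immaterial downstream, since $\mathbb{E}(\bm w_i)$ enters the AMSE only through $\bm B\bm B\ts$.

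The main obstacle is making the remainder rigorous. Assumption \ref{fboundAss} only bounds $f^{(s)}_{u|\bm x}$, $s\le 3$, in a neighbourhood of the origin, whereas the $v$-integral runs over all of $\mathbb{R}$ and $hv$ leaves that neighbourhood once $v$ is large. I would therefore split the integral at a fixed radius, apply the third-order Taylor expansion with Lagrange remainder of the form $\tfrac1{24}f'''_{u|\bm x}(\xi|\bm x)\,h^4 v^4$ on the inner region, and control the outer region via the exponential decay of $\phi$; the integrability $\int |v^4\phi(v)|\,dv < \infty$ from fact \ref{logiProp} then bounds the inner remainder. Finally, to propagate the $O(h^4)$ bound uniformly through the outer expectation over $\bm x_i$, I would invoke $|f'''_{u|\bm x}|\le C(\bm x)$ together with $\mathbb{E}\{C(\bm x)\|\bm x\|^2\} < \infty$ (again assumption \ref{fboundAss}), which dominate $\mathbb{E}\{\|\bm x_i\|\cdot|\text{remainder}|\}$ and certify that the error is genuinely $O(h^4)$ after integration. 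Since the $O(h)$ and $O(h^3)$ terms vanish by the symmetry of $\phi$, the first surviving correction is exactly the stated $O(h^2)$ term.
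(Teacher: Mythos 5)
Your proof is correct and follows essentially the same route as the paper's: integration by parts against $F_{u|\bm x}$, the substitution $v = u/h$, a fourth-order Taylor expansion of $F_{u|\bm x}(hv|\bm x)$ whose odd-order terms are killed by the symmetry of $\phi$, the logistic variance $\pi^2/3$ producing the $h^2$ coefficient, and control of the Lagrange remainder via $C(\bm x)$, $\int v^4\phi(v)\,dv < \infty$ and the moment condition on $C(\bm x)\|{\bf x}\|$. Your two refinements are both apt and slightly more careful than the paper: splitting the $v$-integral at a fixed radius addresses the fact that assumption \emph{(c)} bounds the derivatives only near the origin (the paper applies the Lagrange form globally without comment), and your hedge on the sign is justified, since the paper's own derivation produces $-\frac{1}{6}h^2\pi^2\mathbb{E}\{f'_{u|\bm x}(0|\bm x_i){\bf x}_i\}$ from the Taylor step yet states the lemma with a plus sign --- an inconsistency that is indeed immaterial because $\mathbb{E}({\bm w}_i)$ enters the AMSE only through ${\bf B}{\bf B}\ts$.
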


\begin{proof}

\begin{align*}
\mathbb{E}({\bm w}_i|{\bm x}_i) =& \; {\bf x}_{i} \left[ \int_{-\infty}^{+\infty} \left\{\Phi\left(\frac{u}{h}\right)-1+\tau\right\}\mathrm{d}F_{u | {\bm x}}(u | {\bm x}_{i})\right]  \\
=& \left[ {\bf x}_{i}\left\{\Phi\left(\frac{u}{h}\right)-1+\tau\right\}F_{u | {\bm x}}(u | {\bm x}_{i}) \right]_{-\infty}^{+\infty} - \frac{{\bf x}_{i}}{h}\left\{ \int_{-\infty}^{+\infty}F_{u | {\bm x}}(u | {\bm x}_{i})\phi\left(\frac{u}{h}\right)\mathrm{d}u\right\}  \\
\text{\footnotesize{using \ref{quantileAss}} $\rightarrow$ } =& \; {\bf x}_{i}\tau - {\bf x}_{i}\left\{ \int_{-\infty}^{+\infty}F_{u | {\bm x}}(hv | {\bm x}_{i})\phi\left(v\right)\mathrm{d}v\right\} \\
\text{\footnotesize{using \ref{logiSym}} $\rightarrow$ } =& \; {\bf x}_{i}\tau - {\bf x}_{i}F_{u | {\bm x}}(0 | {\bm x}_{i}) - {\bf x}_{i}\frac{h^2}{2}f'_{u | {\bm x}}(0 | {\bm x}_{i})\int_{-\infty}^{+\infty}v^2 \phi(v)\mathrm{d}v   \\
& - {\bf x}_{i}\frac{h^4}{24}\int_{-\infty}^{+\infty}v^4f'''_{u | {\bm x}}(\tilde{h}v | {\bm x}_{i})\phi(v)\mathrm{d}v \\
\text{\footnotesize{using \ref{quantileAss}} $\rightarrow$ } =& \; {\bf x}_{i}\frac{h^2\pi^2}{6}f'_{u | {\bm x}}(0 | {\bm x}_{i}) - {\bf x}_{i}\frac{h^4}{24}\int_{-\infty}^{+\infty}v^4f'''_{u | {\bm x}}(\tilde{h}v | {\bm x}_{i})\phi(v)\mathrm{d}v,
\end{align*}
where $\tilde{h} = \tilde{h}(v) : \mathbb R \rightarrow [0,h]$.  Then taking expectation w.r.t. ${\bm x}$, leads to 
\[ \mathbb{E}({\bm w}_{i}) =  \frac{1}{6}h^2\pi^2\mathbb{E}\left\{f'_{u | {\bm x}}(0 | {\bm x}_{i}){\bf x}_{i} \right\} + O(h^4),   \] 
where we used \ref{fboundAss}, \ref{logiProp} and Jensen's inequality to bound the remainder, that is
\begin{align*}
\left\| \mathbb{E}\left\{ {\bf x}_{i}\int_{-\infty}^{+\infty}v^4f'''_{u | {\bm x}}(\tilde{h}v | {\bm x}_{i})\phi(v)\mathrm{d}v  \right\}\right\| & \leq \mathbb{E}\left\{ C({\bm x}_{i})\|{\bf x}_{i}\| \int_{-\infty}^{+\infty}|v^4 \phi(v)| \mathrm{d}v \right\} = O(1). \qedhere
\end{align*}

\end{proof}

\begin{Lemma}
Under assumptions \ref{quantileAss} and \ref{fboundAss}, and for every $i$, we have that
\begin{equation}
\mathbb{E}({\bm w}_{i}{\bm w}_{i}\ts) = \tau(1-\tau)\mathbb{E}({\bf x}_{i}{\bf x}_{i}\ts)-h\mathbb{E}\left\{ f_{u | {\bm x}}(0 | {\bm x}_{i}){\bf x}_{i}{\bf x}_{i}\ts\right\} + O(h^2).
\label{E(Wi'Wi)}
\end{equation}
\end{Lemma}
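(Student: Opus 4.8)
The plan is to reduce the whole computation to a conditional scalar expectation, exactly parallel to the proof of the preceding lemma. Since ${\bm w}_i = {\bf x}_i[\Phi(u_i/h)-1+\tau]$, I would first write ${\bm w}_i{\bm w}_i\ts = {\bf x}_i{\bf x}_i\ts[\Phi(u_i/h)-1+\tau]^2$, so the entire problem becomes computing $g(h,{\bm x}_i) := \mathbb{E}\{[\Phi(u_i/h)-1+\tau]^2 \mid {\bm x}_i\}$ to order $h$ with an $O(h^2)$ remainder, and then taking expectation against ${\bf x}_i{\bf x}_i\ts$. Expanding the square gives three pieces: $\mathbb{E}\{\Phi(u_i/h)^2 \mid {\bm x}_i\}$, the cross term $-2(1-\tau)\mathbb{E}\{\Phi(u_i/h)\mid {\bm x}_i\}$, and the constant $(1-\tau)^2$.

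The linear term is already available from the conditional computation inside the proof of the previous lemma, which gives $\mathbb{E}\{\Phi(u_i/h)\mid {\bm x}_i\} = 1-\tau + \tfrac{h^2\pi^2}{6}f'_{u|{\bm x}}(0|{\bm x}_i) + O(h^4)$; since its correction enters only at order $h^2$, it is absorbed into the remainder. The genuinely new quantity is $\mathbb{E}\{\Phi(u_i/h)^2 \mid {\bm x}_i\}$, which I would handle by integration by parts followed by the substitution $u = hv$:
\[
\mathbb{E}\{\Phi(u_i/h)^2 \mid {\bm x}_i\} = 1 - 2\int_{-\infty}^{+\infty} F_{u|{\bm x}}(hv|{\bm x}_i)\,\Phi(v)\phi(v)\,\mathrm{d}v,
\]
the boundary term equalling $1$ because $\Phi(u/h)^2 F_{u|{\bm x}} \to 1$ as $u\to+\infty$ and vanishes at $-\infty$. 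I would then Taylor expand $F_{u|{\bm x}}(hv|{\bm x}_i) = \tau + hv\, f_{u|{\bm x}}(0|{\bm x}_i) + O(h^2)$, using assumption \ref{quantileAss} to identify $F_{u|{\bm x}}(0|{\bm x}_i)=\tau$ and assumption \ref{fboundAss} to justify the expansion, and evaluate the two leading integrals with the logistic identities $\int \Phi(v)\phi(v)\,\mathrm{d}v = 1/2$ (immediate, since $\Phi\phi = \tfrac12(\Phi^2)'$) and $\int v\Phi(v)\phi(v)\,\mathrm{d}v = 1/2$ from fact \ref{logiProp}. This yields $\mathbb{E}\{\Phi(u_i/h)^2 \mid {\bm x}_i\} = 1 - \tau - h f_{u|{\bm x}}(0|{\bm x}_i) + O(h^2)$. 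Combining the three pieces, the constants collapse via $1-\tau-(1-\tau)^2 = \tau(1-\tau)$, the only surviving $O(h)$ term is $-h f_{u|{\bm x}}(0|{\bm x}_i)$, so $g(h,{\bm x}_i) = \tau(1-\tau) - h f_{u|{\bm x}}(0|{\bm x}_i) + O(h^2)$; multiplying by ${\bf x}_i{\bf x}_i\ts$ and taking expectation over ${\bm x}_i$ gives the claim.

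The main obstacle is not the leading-order bookkeeping but controlling the $O(h^2)$ remainder uniformly in ${\bm x}_i$, so that it survives being weighted by ${\bf x}_i{\bf x}_i\ts$. I would therefore carry the Taylor expansion of $F_{u|{\bm x}}$ with an explicit integral remainder involving $f''_{u|{\bm x}}(\tilde h v|{\bm x}_i)$ for some $\tilde h \in [0,h]$, and bound it exactly as in the previous lemma: assumption \ref{fboundAss} gives $|f''_{u|{\bm x}}| \le C({\bm x}_i)$, fact \ref{logiProp} gives finiteness of $\int v^2\Phi(v)\phi(v)\,\mathrm{d}v$, and the moment condition $\mathbb{E}\{C({\bm x})\|{\bf x}\|^2\} < \infty$ — applied to the $\|{\bf x}_i\|^2$ arising from ${\bf x}_i{\bf x}_i\ts$ — together with Jensen's inequality ensures the remainder is genuinely $O(h^2)$ after integrating over ${\bm x}_i$.
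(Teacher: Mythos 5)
Your proposal is correct and follows essentially the same route as the paper's proof: integration by parts, the substitution $u = hv$, a Taylor expansion of $F_{u|{\bm x}}$ around zero using $F_{u|{\bm x}}(0|{\bm x}_i)=\tau$, the logistic identities $\int \Phi(v)\phi(v)\,\mathrm{d}v = 1/2$ and $\int v\Phi(v)\phi(v)\,\mathrm{d}v = 1/2$, and a Jensen-type bound via $\mathbb{E}\{C({\bm x})\|{\bf x}\|^2\}<\infty$ to control the weighted remainder; the only cosmetic difference is that you expand the square into three pieces before integrating by parts (recycling the previous lemma for the cross term), whereas the paper integrates by parts on $\{\Phi(u/h)-1+\tau\}^2$ directly. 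One trivial slip: the remainder of the first-order expansion of $F_{u|{\bm x}}$ involves $F''_{u|{\bm x}} = f'_{u|{\bm x}}$, not $f''_{u|{\bm x}}$, but this is immaterial since the paper's assumption on density derivatives bounds both by $C({\bm x})$.
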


\begin{proof}

We have that
\[\mathbb{E}({\bm w}_{i}{\bm w}_{i}\ts) = \mathbb{E}\left[ {\bf x}_{i}{\bf x}_{i}\ts \int_{-\infty}^{+\infty}\left\{\Phi\left(\frac{u}{h}\right)-1+\tau\right\}^{2}\mathrm{d}F_{u | {\bm x}}(u | {\bm x}_{i})  \right], \]
where
\begin{align*}
\int_{-\infty}^{+\infty}\left\{\Phi\left(\frac{u}{h}\right)-1+\tau\right\}^{2}\mathrm{d}F_{u | {\bm x}}(u | {\bm x}_{i}) =& \left[\left\{\Phi\left(\frac{u}{h}\right)-1+\tau\right\}^{2}F_{u | {\bm x}}(u | {\bm x}_{i}) \right]_{-\infty}^{+\infty} \\ 
& - \frac{2}{h}\int_{-\infty}^{+\infty}F_{u | {\bm x}}(u | {\bm x}_{i})\left\{\Phi\left(\frac{u}{h}\right)-1+\tau\right\}\phi\left(\frac{u}{h}\right)\mathrm{d}u\\
\text{\footnotesize{using \ref{quantileAss}} $\rightarrow$ } =& \; \tau^2 - 2 \int_{-\infty}^{+\infty}F_{u | {\bm x}}(hv | {\bm x}_{i})\left\{\Phi\left(v\right)-1+\tau\right\}\phi\left(v\right)\mathrm{d}v \\
\text{\footnotesize{using \ref{fboundAss}} $\rightarrow$ } =& \; \tau^2 - 2 \tau\int_{-\infty}^{+\infty}\left\{\Phi\left(v\right)-1+\tau\right\}\phi\left(v\right)\mathrm{d}v \\ 
& - 2hf_{u | {\bm x}}(0 | {\bm x}_{i})\int_{-\infty}^{+\infty}v\left\{\Phi\left(v\right)-1+\tau\right\}\phi\left(v\right)\mathrm{d}v \\
& - h^2 \int_{-\infty}^{+\infty}v^2f'_{u | {\bm x}_{i}}(\tilde{h}v | {\bm x}_{i})\left\{\Phi\left(v\right)-1+\tau\right\}\phi\left(v\right)\mathrm{d}v.
\end{align*}
Then we have
\begin{align*}
\int_{-\infty}^{+\infty}\left\{\Phi\left(v\right)-1+\tau\right\}\phi\left(v\right)\mathrm{d}v =& \left[ \frac{\left\{\Phi\left(v\right)-1+\tau\right\}^2}{2} \right]^{+\infty}_{-\infty} = \frac{2\tau-1}{2}, 
\end{align*}
and, using \ref{logiSym} and \ref{logiProp}, we have
$$
\int_{-\infty}^{+\infty}v\left\{\Phi\left(v\right)-1+\tau\right\}\phi\left(v\right)\mathrm{d}v = \int_{-\infty}^{+\infty}v\Phi\left(v\right)\phi\left(v\right)\mathrm{d}v + (-1+\tau)\int_{-\infty}^{+\infty}v\phi\left(v\right)\mathrm{d}v = \frac{1}{2},
$$
thus we obtain
\begin{align*}
\int_{-\infty}^{+\infty}\left\{\Phi\left(\frac{u}{h}\right)-1+\tau\right\}^{2}\mathrm{d}F_{u | {\bm x}}(u | {\bm x}_{i}) =& \; \tau^2 - \tau(2\tau - 1) - hf_{u | {\bm x}}(0 | {\bm x}_{i}) \\ 
& - h^2 \int_{-\infty}^{+\infty}v^2f'_{u | {\bm x}_{i}}(\tilde{h}v | {\bm x}_{i})\left\{\Phi\left(v\right)-1+\tau\right\}\phi\left(v\right)\mathrm{d}v.
\end{align*}
Using \ref{fboundAss} and \ref{logiProp} we obtain
\begin{align*}
& \left\| \mathbb{E}\left[ {\bf x}_{i}{\bf x}_{i}\ts\int_{-\infty}^{+\infty}v^2f'_{u | {\bm x}_{i}}(\tilde{h}v | {\bm x}_{i})\left\{\Phi\left(v\right)-1+\tau\right\}\phi(v)\mathrm{d}v  \right]\right\| \\
& \leq \mathbb{E}\left[C({\bm x}_{i})\|{\bf x}_{i}\|^2 \int_{-\infty}^{+\infty} |v^2 \left\{\Phi\left(v\right)-1+\tau\right\} \phi(v)| \mathrm{d}v \right]= O(1),
\end{align*}
which leads to
\[ \mathbb{E}({\bm w}_{i}{\bm w}_{i}\ts) = \tau(1-\tau)\mathbb{E}({\bf x}_{i}{\bf x}_{i}\ts)-h\mathbb{E}\{f_{u | {\bm x}}(0 | {\bm x}_{i}){\bf x}_{i}{\bf x}_{i}\ts\} + O(h^2). \qedhere \]

\end{proof}

Under assumptions \ref{iidX} to \ref{nonSing}, part of Lemma 9 in \cite{kaplan2017smoothed} proves that
\begin{equation}
\sqrt{n}(\hat{\bm{\beta}}-\bm{\beta}_{0})=-\bigg[\frac{1}{\sqrt{n}}\nabla_{\bm{\beta}}^{T}{\bm m}_{n}(\bm{\beta})\Big|_{\bm \beta = \bm \beta_0}\bigg]^{-1}{\bm m}_{n}(\bm{\beta}_{0}) + O_p\left(\frac{1}{\sqrt{n}}\right),
\end{equation}
and 
\begin{equation}
\mathbb{E}\bigg[\frac{1}{\sqrt{n}}\nabla_{\bm{\beta}}^{T}{\bm m}_{n}(\bm{\beta})\Big|_{\bm \beta = \bm \beta_0}\bigg] = \mathbb{E}\left[{\bf x}_i {\bf x}_i\ts f_{u|{\bm x}}(0|{\bm x}_i) \right] + O(h^2),
\end{equation}
where ${\bm m}_n({\bm \beta})$ has been defined in (\ref{eq:kernSmo}). 
Define $\bm H = n^{-1/2}\nabla_{\bm{\beta}}^{T}{\bm m}_{n}(\bm{\beta}) = n^{-1} \sum_{i=1}^n h^{-1}\phi(u_i/h){\bf x}_i{\bf x}_i\ts$ evaluated at $\bm \beta = \bm \beta_0$, and ${\bm \Sigma}_f = \mathbb{E}\left[{\bf x}_i {\bf x}_i\ts f_{u|{\bm x}}(0|{\bm x}_i) \right]$. Under the assumptions adopted so far, we have that
\begin{align*} 
\text{var}(H_{jk}) \, = \, & \frac{1}{nh^2}\text{var}\left\{({\bf x}_i{\bf x}_i\ts)_{jk}\phi\left(\frac{u_i}{h}\right)\right\} \leq \frac{1}{nh^2}\mathbb{E}\left\{({\bf x}_i{\bf x}_i\ts)_{jk}^2\phi\left(\frac{u_i}{h}\right)^2\right\} \\ \, = \, & \frac{1}{nh^2} \mathbb{E}\left\{({\bf x}_i {\bf x}_i\ts)_{jk}^2 \int \phi\left(\frac{u}{h}\right)^2 f_{u|{\bm x}}(u|{\bm x}_i)du\right\} = \frac{1}{nh} \mathbb{E}\left\{({\bf x}_i {\bf x}_i\ts)_{jk}^2 \int \phi\left(v\right)^2 f_{u|{\bm x}}(hv|{\bm x}_i)dv\right\} \\ = &  \frac{1}{nh} \mathbb{E}\left[({\bf x}_i {\bf x}_i\ts)_{jk}^2 \left\{ f_{u|{\bm x}}(0|{\bm x}_i) \int \phi\left(v\right)^2 dv  + hf'_{u|{\bm x}}(0|{\bm x}_i)\int v \phi\left(v\right)^2 dv + O(h^2) \right\} \right] \\ = &  O\left(\frac{1}{nh}\right) + O\left(\frac{1}{n}\right),
\end{align*} 
so that we can write $\bm H = \mathbb{E}(\bm H) + {\bm C}$, where $\bm C$ is a matrix such that with $\mathbb{E}(\bm C) = {\bm 0}$, and with elements of size $O(1/\sqrt{nh})$. Now 
\begin{align*} 
\mathbb{E}({\bm H}^{-1}) = & \mathbb{E}[\{\mathbb{E}({\bm H}) + {\bm C}\}^{-1}] = \mathbb{E}\{\mathbb{E}({\bm H})^{-1} - \mathbb{E}({\bm H})^{-1}{\bm C}\mathbb{E}({\bm H})^{-1} + O(||{\bm C}||^2)\} \\  = & {\bm \Sigma}_f^{-1} + O(h^2) + O(1/nh),
\end{align*}
so we have
\begin{align*} 
\text{AMSE}(h) = & \mathbb{E}\{n(\hat{\bm \beta}-{\bm \beta}_0)(\hat{\bm \beta}-{\bm \beta}_0)\ts\} \nonumber \\ = & 
{\bm \Sigma}_f^{-1}{\mathbb{E}({\bm m}_n{\bm m}_n\ts)}{\bm \Sigma}_f^{-1} + R(h)  \\ = & {\bm \Sigma}_f^{-1}{\mathbb{E}\left(\frac{1}{n}\sum_{i=1}^n{\bm w}_i\sum_{i=1}^n{\bm w}_i\ts\right)}{\bm \Sigma}_f^{-1} + R(h) \\ = &
\frac{1}{n}{\bm \Sigma}_f^{-1}{\mathbb{E}\left(\sum_{i=1}^n{\bm w}_i{\bm w}_i\ts + \sum_{i=1}^n \sum_{j\neq i}{\bm w}_i{\bm w}_j\ts\right)}{\bm \Sigma}_f^{-1} + R(h) \\ \text{\footnotesize{using \ref{iidX}} $\rightarrow$ } = &
{\bm \Sigma}_f^{-1}\left\{\mathbb{E}\left({\bm w}_i{\bm w}_i\ts\right) + (n-1)\mathbb{E}({\bm w}_i)\mathbb{E}({\bm w}_j)\ts\right\}{\bm \Sigma}_f^{-1} + R(h) \\ \text{\footnotesize{using (\ref{E(Wi'Wi)})} $\rightarrow$ } = &
{\bm \Sigma}_f^{-1}\left[\tau(1-\tau)\mathbb{E}({\bf x}_{i}{\bf x}_{i}\ts)-h\mathbb{E}\big\{ f_{u | {\bm x}}(0 | {\bm x}_{i}){\bf x}_{i}{\bf x}_{i}\ts\right\} + O(h^2) \\ + & (n-1)\mathbb{E}({\bm w}_i)\mathbb{E}({\bm w}_j)\ts\big]{\bm \Sigma}_f^{-1}  + R(h) \\  = &
{\bm \Sigma}_f^{-1}{\bf V}^{\frac{1}{2}}\left\{{\bf I}-h {\bf A}{\bf A}\ts  + (n-1){\bf V}^{-\frac{1}{2}}\mathbb{E}({\bm w}_i)\mathbb{E}({\bm w}_j)\ts{\bf V}^{-\frac{1}{2}}\right\}{\bf V}^{\frac{1}{2}}{\bm \Sigma}_f^{-1}  + R(h),
\end{align*}
where $R(h) = O(h^2) + O(1/nh)$ and ${\bf A} = \mathbb{E}\{f_{u|{\bm x}}(0|{\bm x}_i)^{1/2}{\bf V}^{-1/2}{\bf x}_i\}$. Using (\ref{E(Wi)}) we have that
\begin{align*}
(n-1){\bf V}^{-\frac{1}{2}}\mathbb{E}({\bm w}_{i})\mathbb{E}({\bm w}_{i})\ts{\bf V}^{-\frac{1}{2}} & = (n-1)\left[ \left\{ h^2{\bf B}+O(h^4)\right\}\left\{ h^2{\bf B}\ts+O(h^4)\right\} \right] \\
& = (n-1)\left\{h^4{\bf B}{\bf B}\ts + O(h^6)\right\} \\
& = nh^4{\bf B}{\bf B}\ts + O(h^4) + O(nh^6).
\end{align*}
where ${\bf B} = \pi^2\mathbb{E}\{ f'_{u | {\bm x}}(0 | {\bm x}_{i}){\bf V}^{-1/2}{\bf x}_{i}\}/6$, so that 
$$
\text{AMSE}(h) = {\bm \Sigma}_f^{-1}{\bf V}^{\frac{1}{2}}\left\{{\bf I}-h {\bf A}{\bf A}\ts + nh^4{\bf B}{\bf B}\ts\right\}{\bf V}^{\frac{1}{2}}{\bm \Sigma}_f^{-1}  + O(h^2) + O(1/nh) + O(nh^6).
$$
We minimize the first term on r.h.s. of the expression for the AMSE, while discarding the remaining terms. This is justified as long as first term dominates the rest, which happens for $h = O(n^{-\psi})$ with $1/5 < \psi < 2/5$. Notice that in \cite{kaplan2017smoothed} the $O(1/nh)$ term above appears to be $O(1/\sqrt{nh})$, which could be discarded only if $\psi < 1/3$. This is an important difference as, under the ELF loss, it would invalidate the optimal $h = O(n^{-1/3})$ rate derived below. 

Proving that 
$$
h^* = \left(\frac{{\bf A}\ts{\bf A}}{4n{\bf B}\ts{\bf B}}\right)^{\frac{1}{3}}.
$$
minimises $\text{tr}(nh^4{\bf B}{\bf B}\ts-h {\bf A}{\bf A}\ts)$ is straightforward. Under the further assumption that the distribution of $u_i$ does not depend on $\bm x_i$, we have that
$$
{\bf A}\ts{\bf A} = \frac{f_{u}(0)\text{tr}({\bf V}{\bf V}^{-1})}{\tau(1-\tau)} = \frac{f_{u}(0)d}{\tau(1-\tau)},
$$
and 
$$
{\bf B}\ts {\bf B} = \frac{1}{36}\pi^4f'_{u}(0)^2\mathbb{E}\left({\bf x}_{i}\right)\ts{\bf V}^{-1}\mathbb{E}\left({\bf x}_{i}\right) = \frac{1}{\tau(1-\tau)36}\pi^4f'_{u}(0)^2,
$$
where the second equality in the last equation is proved by \cite{kaplan2017smoothed}, under the reasonable assumption that one of the elements of ${\bf x}_i$ is fixed to a non-zero real number (i.e. the model contains an intercept). Hence, we have
$$
\tilde{h}^* = \left[\frac{d}{n}\frac{9f_{u}(0)}{\pi^4f'_{u}(0)^2}\right]^{\frac{1}{3}}.
$$
which completes the proof.  

\subsection{Derivatives of the ELF log-likelihood} \label{sec:ELFderiv}

The logarithm of the ELF density is 
\[
\text{ll}(y)=\log\tilde{p}_{F}(y - \mu) = (1-\tau)\frac{y-\mu}{\sigma}-\lambda\log\bigg(1+e^{\frac{y-\mu}{\lambda\sigma}}\bigg)-\log\bigg[\lambda\sigma\text{Beta}\big\{\lambda(1-\tau),\lambda\tau \big\}\bigg],
\]
When evaluating it numerically, it is important to approximate $\log(1+e^{z})$ with $z+e^{-z}$ when $z = (y-\mu)/\lambda\sigma >18$, as suggested by \cite{machler2012accurately}. 
The gradient is
\[
\frac{\partial \text{ll}(y)}{\partial\mu}=\frac{1}{\sigma}\bigg\{\Phi(y|\mu,\lambda\sigma)-1+\tau\bigg\},
\;\;\;
\frac{\partial \text{ll}(y)}{\partial\sigma}=\frac{y-\mu}{\sigma^{2}}\bigg\{\Phi(y|\mu,\lambda\sigma)-1+\tau\bigg\}-\frac{1}{\sigma},
\]
where $\Phi(y|\mu,\lambda\sigma)$ is the logistic c.d.f., with location $\mu$ and scale $\lambda\sigma$. The Hessian is
\[
\frac{\partial^{2}\text{ll}(y)}{\partial\mu^{2}}=-\frac{1}{\sigma}\phi(y|\mu,\lambda\sigma),
\]
\[
\frac{\partial^{2}\text{ll}(y)}{\partial\sigma^{2}} = 2\frac{y-\mu}{\sigma^{3}}\Bigg\{1-\tau-\Phi(y|\mu,\lambda\sigma)-\frac{1}{2}(y-\mu)\phi(y|\mu,\lambda\sigma)\Bigg\}+\frac{1}{\sigma^{2}},
\]
\[
\frac{\partial^{2}\text{ll}(y)}{\partial\mu\partial\sigma}=-\frac{1}{\sigma^{2}}\Bigg\{(y-\mu)\phi(y|\mu,\lambda\sigma)+\Phi(y|\mu,\lambda\sigma)-1+\tau\Bigg\},
\]
where $\phi(y|\mu,\lambda\sigma)$ is the logistic p.d.f.. Define $z=(y-\mu)/(\lambda\sigma)$ so that 
$
\Phi(y|\mu,\lambda\sigma)=\Phi(z|0,1)=\Phi(z)=(1+e^{-z})^{-1},
$
is the sigmoid function. Also, define $\Phi^{(k)}(z)=\partial\Phi^{(k)}(z)/\partial z^{k}$. Then, derivatives of higher order are
\[
\frac{\partial^{3}\text{ll}(y)}{\partial\mu^{3}}=\frac{\Phi^{(2)}(z)}{\lambda^{2}\sigma^{3}},
\;\;\;\;\;\;
\frac{\partial^{4}\text{ll}(y)}{\partial\mu^{4}}=-\frac{\Phi^{(3)}(z)}{\lambda^{3}\sigma^{4}},
\]
\[
\frac{\partial^{3}\text{ll}(y)}{\partial\sigma^{3}}=-\frac{3}{\sigma}\frac{\partial^{2}\text{ll}(y)}{\partial\sigma^{2}}+\frac{\lambda z^{2}}{\sigma^{3}}\Bigg\{3\Phi^{(1)}(z)+z\Phi^{(2)}(z)+\frac{1}{\lambda z^{2}}\Bigg\},
\]
\[
\frac{\partial^{4}\text{ll}(y)}{\partial\sigma^{4}}=-\frac{4}{\sigma}\bigg\{2\frac{\partial^{3}\text{ll}(y)}{\partial\sigma^{3}}+\frac{3}{\sigma}\frac{\partial^{2}\text{ll}(y)}{\partial\sigma^{2}}\bigg\}-\frac{\lambda z^{3}}{\sigma^{4}}\Bigg\{4\Phi^{(2)}(z)+z\Phi^{(3)}(z)-\frac{2}{\lambda z^{3}}\Bigg\},
\]
\[
\frac{\partial^{3}\text{ll}(y)}{\partial\mu^{2}\partial\sigma}=\frac{1}{\lambda\sigma^{3}}\big\{z\Phi^{(2)}(z)+2\Phi^{(1)}(z)\big\}, \;\;\;\;\;\;
\frac{\partial^{4}\text{ll}(y)}{\partial\mu^{3}\partial\sigma}=-\frac{1}{\lambda^{2}\sigma^{4}}\big\{z\Phi^{(3)}(z)+3\Phi^{(2)}(z)\big\},
\]
\[
\frac{\partial^{3}\text{ll}(y)}{\partial\mu\partial\sigma^{2}}=\frac{1}{\sigma^{3}}\bigg\{2\big\{\Phi(z)-1+\tau\big\}+4z\Phi^{(1)}(z)+z^{2}\Phi^{(2)}(z)\bigg\},
\]
\[
\frac{\partial^{4}\text{ll}(y)}{\partial\mu\partial\sigma^{3}}=-\frac{3}{\sigma}\frac{\partial^{3}\text{ll}(y)}{\partial\mu\partial\sigma^{2}}-\frac{z}{\sigma^{4}}\big\{6\Phi^{(1)}(z)+6z\Phi^{(2)}(z)+z^{2}\Phi^{(3)}(z)\big\},
\]
\[
\frac{\partial^{4}\text{ll}(y)}{\partial\mu^{2}\partial\sigma^{2}}=-\frac{1}{\lambda\sigma^{4}}\bigg\{ z^{2}\Phi^{(3)}(z)+6z\Phi^{(2)}(z)+6\Phi^{(1)}(z)\bigg\},
\]
where
$
\Phi^{(1)}(z)=\Phi(z)\big\{1-\Phi(z)\big\},
$
$
\Phi^{(2)}(z)=\Phi^{(1)}(z)-2\Phi^{(1)}(z)\Phi(z),
$
and
$
\Phi^{(3)}(z)=\Phi^{(2)}(z)-2\Phi^{(2)}(z)\Phi(z)-2\Phi^{(1)}(z)^{2}.
$

\subsection{ELF saturated log-likelihood and deviance} \label{sec:ELFsatlik}

To find the saturated log-likelihood, $\text{ll}_{s}$, we need to maximise $\tilde{p}_{F}(y - \mu)$ w.r.t. $\mu$. This leads to 
\[
\hat{\mu}=\lambda\sigma\log\Big(\frac{\tau}{1-\tau}\Big)+y.
\]
so the saturated log-likelihood is
\[
\text{ll}_{s}(y)=(1-\tau)\lambda\log\big(1-\tau\big)+\lambda\tau\log(\tau)-\log\bigg[\lambda\sigma\text{Beta}\big\{\lambda(1-\tau),\lambda\tau\big\}\bigg],
\]
and has derivatives
$
\partial \text{ll}_{s}(y)/\partial\sigma=-{\sigma}^{-1},
$
$
\partial^{2}\text{ll}_{s}(y)/\partial\sigma^{2}=\sigma^{-2}.
$
The saturated loss we refer to in the main text is simply $\tilde{\text{ll}} = - (1-\tau)\lambda\log\big(1-\tau\big)-\lambda\tau\log(\tau)$.
The deviance is
\[
\text{Dev}(y)=2\big[\text{ll}_{s}(y)-\text{ll}(y)\big]=2\Bigg[(1-\tau)\lambda\log\big(1-\tau\big)+\lambda\tau\log(\tau)-(1-\tau)\frac{y-\mu}{\sigma}+\lambda\log\bigg\{1+e^{\frac{y-\mu}{\lambda\sigma}}\bigg\}\Bigg],
\]
which is identical to the loss-based deviance definition $2[\text{lo}(\mu, \sigma)-\tilde{\text{ll}}]$ in the main text.

\section{Stabilising computation under the ELF density } \label{app:stableNewDensity}

\subsection{Dealing with zero weights in PIRLS} \label{sec:zeros1}

Quantile regression with the ELF loss requires that we work with many weights that can be very close to zero, while the corresponding log-likelihood or deviance derivative is far from zero. This can lead to a situation in which the vector containing $w_i z_i$ is well scaled, while the vector containing $\sqrt{|w_i|}z_i$ is very poorly scaled. This scaling problem can reverse the usual stability improvement of QR-based least squares estimation over direct normal equation solution.

We adopt the notation of \cite{wood2011fast}. Let ${\bar {\bf W}}$ be a diagonal matrix with ${\bar { W}}_{ii} = |w_i|$ and let $\bf E$ be a matrix such that ${\bf S}^{\bm \gamma} = {\bf E}\ts{\bf E}$. Then let $\bm{\mathcal{\bm Q}}\bm{\mathcal{\bm R}}$ be the QR decomposition of $\sqrt{\bar {\bf W}}{\bf X}$ and define the further QR decomposition
$$
 \begin{pmatrix}
  \bm{\mathcal{\bm R}}\\
  {\bf E}
 \end{pmatrix}
= {\bf Q}{\bf R}. 
$$
Define the matrix ${\bf Q}_1 = \bm{\mathcal{\bm Q}}{\bf Q}[1{:}d, :]$,  where $d$ is the number of columns of $\bf X$ and ${\bf Q}[1{:}d, :]$ indicates the first $d$ rows of ${\bf Q}$. We also need to define the diagonal matrix ${\bf I}^-$, such that $I^-_{ii} $ is equal to 0 if $w_{i}>0$ and 1 otherwise, and the singular value decomposition ${\bf I}^-{\bf Q}_1 = \bf{U}\bf{D}\bf{V}\ts$. See \cite{wood2011fast} for details on how to deal with non-identifiable parameters.

Using this notation, \cite{wood2011fast} shows that
$$
\hat \bp = {\bf R}^{-1} {\bf V}({\bf I} - 2 {\bf D}^2)^{-1}{\bf V}\ts {\bf Q}_1\ts \sqrt{\bar {\bf W}} \bar {\bf z} = {\bf R}^{-1}  {\bf f},
$$
where $\bar {\bf z}$ is a vector such that $\bar {z}_i = {z}_i$ if $w_i \geq 0$ and $\bar {z}_i = -{z}_i$ otherwise, while the definition of $\bf f$ should be obvious. Now we can test for stability of the computation to the scaling of $\sqrt{\bar {\bf W}} \bar {\bf z}$ by testing whether
$$
{\bf R}{\bf Q}_1\ts \sqrt{\bar {\bf W}} \bar {\bf z} = {\bf X}\ts {\bf Wz},
$$
to sufficient accuracy. If it does not, then we recompute $\bf f$ using
$$
{\bf f} = {\bf V}({\bf I} - 2 {\bf D}^2){\bf V}\ts {\bf R}^{-1}{\bf X}\ts {\bf Wz}.
$$

If we define the matrices
$$
{\bf P} = {\bf R}^{-1} {\bf V}({\bf I} - 2 {\bf D}^2)^{-\frac{1}{2}},\;\;\;\;\;
{\bf K} = {\bf Q}_1 {\bf V}({\bf I} - 2 {\bf D}^2)^{-\frac{1}{2}},
$$
then another possibility, that may be more convenient when using $\hat \bp = {\bf PK}\ts \sqrt{\bar {\bf W}} \bar {\bf z}$, is to test whether $ {\bf K}\ts \sqrt{\bar {\bf W}} \bar {\bf z} = {\bf P}\ts {\bf Wz}$ holds to sufficient accuracy, and to use $\hat \bp = {\bf PP}\ts {\bf Wz}$ if not.

\subsection{Dealing with zero weights in LAML} \label{sec:zeros2}

Here we show how the gradient and Hessian of $\log |{\bf X}\ts {\bf WX} + {\bf S}_{\bm \lambda}|$, which are needed to maximise the LAML using Newton algorithm, can be computed in a stable manner. In order to be consistent with the notation of \cite{wood2011fast}, in this section we indicate the smoothing parameter vector with $\bm \lambda$, rather than with $\bm \gamma$, the penalty matrix with ${\bf S}_{\bm \lambda}$, rather than ${\bf S}^{\bm \gamma}$, and we define $\bm \rho = \log \bm \lambda$. Notice that $({\bf X}\ts {\bf WX} + {\bf S}_{\bm \lambda})^{-1} = {\bf P}{\bf P}\ts$, hence 
\begin{eqnarray*}
\pdif{\log |{\bf X}\ts {\bf WX} + {\bf S}_{\bm \lambda}|}{\rho_k} &=&
\text{tr} \left \{({\bf X}\ts {\bf WX} + {\bf S}_{\bm \lambda})^{-1} {\bf X}\ts\pdif{\bf W}{\rho_k} {\bf X} \right \} + \lambda_k \text{tr} \left \{
({\bf X}\ts {\bf WX} + {\bf S}_{\bm \lambda})^{-1} {\bf S}_k
\right \}\\ &=&
\text{tr} \left ({\bf P}\ts {\bf X}\ts \pdif{\bf W}{\rho_k} {\bf XP}  \right ) + \lambda_k
\text{tr} \left ({\bf P}\ts {\bf S}_k {\bf P} \right ).
\end{eqnarray*}
Then the Hessian is
\begin{eqnarray*}
\pddif{\log |{\bf X}\ts {\bf WX} + {\bf S}_{\bm \lambda}|}{\rho_k}{\rho_j} & = &
 \text{tr} \left \{({\bf X}\ts {\bf WX} + {\bf S}_{\bm \lambda})^{-1} {\bf X}\ts\pddif{\bf W}{\rho_k}{\rho_k} {\bf X} \right \} +
 \delta^j_k \lambda_j \text{tr} \left \{({\bf X}\ts {\bf WX} + {\bf S}_{\bm \lambda})^{-1} {\bf S}_j \right \} \\
 & - &
\text{tr} \left \{({\bf X}\ts {\bf WX} + {\bf S}_{\bm \lambda})^{-1}
\left ( {\bf X}\ts\pdif{\bf W}{\rho_k} {\bf X} + \lambda_j {\bf S}_j \right )
({\bf X}\ts {\bf WX} + {\bf S}_{\bm \lambda})^{-1}{\bf X}\ts\pdif{\bf W}{\rho_j} {\bf X} \right \}\\
& - &
\lambda_k \text{tr} \left \{({\bf X}\ts {\bf WX} + {\bf S}_{\bm \lambda})^{-1}
\left ( {\bf X}\ts\pdif{\bf W}{\rho_k} {\bf X} + \lambda_j {\bf S}_j \right )
({\bf X}\ts {\bf WX} + {\bf S}_{\bm \lambda})^{-1}{\bf S}_k \right \},
\end{eqnarray*}
so that
\begin{eqnarray*}
\pddif{\log |{\bf X}\ts {\bf WX} + {\bf S}_{\bm \lambda}|}{\rho_k}{\rho_j} & = &
\text{tr} \left ({\bf P}\ts {\bf X}\ts \pddif{\bf W}{\rho_k}{\rho_j} {\bf XP}  \right ) + \lambda_k \text{tr} \left ({\bf P}\ts {\bf S}_k {\bf P} \right ) \\
& - & \text{tr} \left ({\bf P}\ts {\bf X}\ts \pdif{\bf W}{\rho_j} {\bf XP} {\bf P}\ts {\bf X}\ts \pdif{\bf W}{\rho_k} {\bf XP} \right )\\
& - & \lambda_j \text{tr} \left ({\bf P}\ts {\bf S}_j {\bf P} {\bf P}\ts {\bf X}\ts \pdif{\bf W}{\rho_k} {\bf XP}  \right )
 -\lambda_k \text{tr} \left ( {\bf P}\ts {\bf X}\ts \pdif{\bf W}{\rho_j} {\bf XP}{\bf P}\ts {\bf S}_k {\bf P}\right )\\
& - & \lambda_j \lambda_k  \text{tr} ( {\bf P}\ts {\bf S}_j {\bf P}{\bf P}\ts {\bf S}_k {\bf P}).
\end{eqnarray*}
If we define the diagonal matrices ${\bf T}_j = \text{diag}(\ilpdif{w_i}{\rho_j})$ and
${\bf T}_{jk} = \text{diag}(\ilpddif{w_i}{\rho_j}{\rho_k})$, then this last expression corresponds to the equivalent formula in \cite{wood2011fast} and can be computed in the same way. The point of all this is that, if we followed the original formulation of \cite{wood2011fast}, we would be dividing by the (almost zero) weights in the definition of ${\bf T}_j$ and ${\bf T}_{jk}$. This is avoided here.


\section{Details regarding the calibration procedure}

\subsection{Calibration by bootstrapping} \label{app:bootCal}

Let ${\bf x}_i$ be the $i$-th vector of covariates and indicate with ${\bf X}$ the design matrix. Let $\mathbb{E}(z)$ and $\text{var}(z)$ be the expectation and variance, w.r.t. $\mathbb{P}$, of some r.v. z.  The aim here is estimating
\begin{equation} \label{eq:KLBoot}
\text{IKL}_{\mathbb{P}}(\sigma_{0})\propto\int\bigg\{\frac{\text{var}\{\hat{\mu}({\bm x})\}}{v({\bm x})}+\log\frac{v({\bm x})}{\text{var}\{\hat{\mu}({\bm x})\}} + \frac{1}{v({\bm x})}\big[{\mu}_0({\bm x}) - \mathbb{E}\{{\hat{\mu}({\bm x})\}}\big]^2\bigg\}^{\zeta}p({\bm x})d{\bm x},
\end{equation}
by bootstrapping (that is, sampling with replacement) the full dataset and then re-fitting the model on each bootstrap replicate. Relative to the IKL loss based on $\tilde{\bf V}$, notice that (\ref{eq:KLBoot}) contains also a term related to finite sample bias, which can be estimated at no extra cost using the same bootstrap samples used to estimate $\text{var}\{\hat{\mu}({\bm x})\}$. 

Indicate the $k$ bootstrap samples of $\bf y$ and ${\bf X}$ with ${\bf y}^1, \dots, {\bf y}^k$ and ${\bf X}_1, \dots,  {\bf X}_k$, respectively. Given these inputs, Algorithm \ref{calibration} gives the steps needed to estimate $\text{IKL}_{\mathbb{P}}(\sigma_{0})$, for fixed $\sigma_0$. An important feature of this procedure is that the smoothing parameters need to be estimated only once, using the full dataset, so that the cost of each bootstrap replicate is substantially less than the cost of a full model fit. Further, if the bootstrap samples are simulated only once, the marginal variance and bias estimates, and the resulting IKL loss, are deterministic functions of $\sigma_0$. 
\begin{algorithm}
\caption{Estimating $\text{IKL}_{\mathbb{P}}(\sigma_{0})$ for fixed $\sigma_0$}
\label{calibration}
Assume that $\tau$ is fixed and that $\lambda$ and $\sigma(\bm x)$ are functions of $\sigma_0$, determined explained as in Section \ref{sec:approxLapl}. Then the IKL loss is estimated as follows:
\begin{algorithmic}[1]
\STATE using the design matrix, $\bf X$, and response, $\bf y$, estimate $\bm \gamma$ by minimising (\ref{eq:LAMLsimple}). 
Given $\hat{\bm \gamma}$, estimate ${\bm \beta}$ by minimising the penalised loss (\ref{eq:DevCrit}) and obtain the reference estimate $\hat{\bm \mu}^0 = {\bf X} \hat{\bm \beta}$.
\STATE  For $j = 1,\dots,k$
\begin{enumerate}
\item Given $\hat{\bm \gamma}$, estimate $\bm \beta$ by minimising the penalised loss (\ref{eq:DevCrit}), based on the $j$-th bootstrap design matrix, ${\bf X}_j$, and response vector, ${\bf y}^j$. The resulting estimate is $\hat{\bm \beta}_j$.
\item Obtain the bootstrapped quantile prediction vector $\hat{\bm \mu}^j = {\bf X}\hat{\bm \beta}_j$.
\end{enumerate} 
\STATE Estimate the loss using 
$$
\hat{\text{IKL}}_{\mathbb{P}}(\sigma_0) = n^{-1}\sum_{i=1}^n \bigg[ \frac{\hat{\text{var}}\{\hat{\mu}({\bm x}_i)\}}{v({\bm x}_i)} + \log\frac{v({\bm x}_i)}{\hat{\text{var}}\{\hat{\mu}({\bm x}_i)\}}+ \frac{1}{v({\bm x}_i)}\big\{\hat{\mu}^0_i - \bar{\mu}({\bm x}_i)\big\}^2 \bigg]^\gamma. 
$$
where $\bar{\mu}({\bm x}_i)$ and $\hat{\text{var}}\{\hat{\mu}({\bm x}_i)\}$ are the sample mean and variance of $\hat{\mu}_i^1, \dots, \hat{\mu}_i^k$.
\end{algorithmic}
\end{algorithm}
\subsection{A regularised estimator for $\bm{\Sigma}_{\nabla}$} \label{sec:regSand} 

Let $\text{lo}=\text{lo}\{\mu(\bm x), \sigma(\bm x)\}$ be the ELF loss, then the covariance matrix of its gradient is
\begin{equation} \label{eq:sigNablaEgam}
{\bm \Sigma}_\nabla = \text{cov}\big(\nabla_{\bm{\beta}}\text{lo}|_{\bm \beta = \hat{\bm \beta}}\big)=\text{cov}({\bf x}\,\text{lo}')=\text{cov}\bigg({\bf x}\frac{1}{\sigma_0}\big[\Phi\{y|\hat{\mu}({\bm x}),\lambda\sigma_0\}-1+\tau\big]\bigg),
\end{equation}
where $\hat{\mu}({\bm x}) = {\bf x}\ts\hat{\bm \beta}$, $\text{lo}'=\partial \text{lo}/\partial\mu|_{\mu=\hat{\mu}}$ and $\Phi(y|a,b)$ is the logistic c.d.f. with mean $a$ and scale $b$. 
Without loss of generality, assume that $\sigma_0=1$ and $\tau>0.5$. Now define $s=\text{sign}(\text{lo}')$,
and $\omega=|\text{lo}'|$. The latter can be viewed as a weight
taking value in $[1-\tau,\tau]$. The covariance matrix could simply be
estimated by
\begin{equation} \label{eq:empCovEst}
\hat{\bm{\Sigma}}_{\nabla}=\frac{1}{n}\sum_{i=1}^n\omega_{i}^{2}{\bf x}_{i}{\bf x}_{i}\ts-{\bf x}_{\omega}{\bf x}_{\omega}\ts,\;\;\;\;\text{where}\;\;\;\;{\bf x}_{\omega}=\frac{1}{n}\sum_{i=1}^ns_i\omega_i{\bf x}_{i},
\end{equation}
but this estimator can be highly variable. In particular, set $\lambda\approx0$ and assume that $\hat{\mu}({\bm x})$ approximately divides the responses into $n\tau$ samples falling below it and $n(1-\tau)$ above it. Then, if $\tau\approx1$, $n\tau$ of the $\bf x$ vectors in (\ref{eq:empCovEst}) have weight $1-\tau\approx0$ and the remaining $n(1-\tau)$ have weight $\tau \approx 1$. Hence, when fitting extreme quantiles
with low loss smoothness (low $\lambda$), the estimator $\hat{\bm{\Sigma}}_{\nabla}$
will be based on very few observed ${\bf x}_{i}$'s, which is problematic
when $d=\text{dim}({\bf x})$ is close to $n(1-\tau)$. Obviously,
the same problem occurs when $\tau\approx0$.

We address this issue by regularising $\hat{\bm{\Sigma}}_{\nabla}$
using an inconsistent, but less variable, estimator. In particular, if we assume that $\text{lo}'$ and $(\text{lo}')^2$ are uncorrelated with, respectively, any element of ${\bf x}$ or of ${\bf x}{\bf x}\ts$, we have
\[
\text{cov}(\text{lo}'{\bf x})=\mathbb{E}\{(\text{lo}')^{2}\}\mathbb{E}({\bf x}{\bf x}\ts)-\mathbb{E}(\text{lo}')^{2}\mathbb{E}({\bf x})\mathbb{E}({\bf x})\ts,
\]
which motivates the adoption of the estimator
\[
\tilde{\bm{\Sigma}}_{\nabla}=n^{-2}\bigg\{\Big(\sum_{i=1}^n\omega_{i}^{2}\Big){\bf X}\ts{\bf X}-\Big(\sum_{i=1}^ns_i\omega_i\Big)^{2}\bar{{\bf x}}\bar{{\bf x}}\ts\bigg\},
\]
where $\bar{{\bf x}}$ is the vector of column-means of ${\bf X}$. To see that $\tilde{\bm{\Sigma}}_{\nabla}$
is less variable than $\hat{\bm{\Sigma}}_{\nabla}$, consider a simplified
setting where $n\tau$ rows of ${\bf X}$ are randomly associated
with weight $1-\tau$, the rest with weight $\tau$ and assume, without
loss of generality, that we know that $\mathbb{E}({\bf x})=\bm{0}$.
Then we have that $\hat{\bm{\Sigma}}_{\nabla}\propto{\bf X}_{\tau}\ts{\bf X}_{\tau}+O\{(1-\tau)^{2}\}$
and $\tilde{\bm{\Sigma}}_{\nabla}\propto{\bf X}\ts{\bf X}$, where
${\bf X}_{\tau}$ is formed by the $n(1-\tau)$ rows of ${\bf X}$
associated with weight $\tau$. Under the assumption mentioned above both estimators
are consistent but, for $\tau\approx1$, $\hat{\bm{\Sigma}}_{\nabla}$
is effectively based on only $n(1-\tau)$ samples. Notice also that, because $\hat{\text{var}}(\text{lo}_i')=n^{-1}\sum_{i}\omega_{i}^{2}-(n^{-1}\sum_{i}s_i\omega_{i})^{2}>0$,
then $\tilde{\bm{\Sigma}}_{\nabla}$ is positive definite as
long as ${\bf X}$ is full rank.

Given these considerations, we propose the following regularised estimator 
\[
\mathring{\bm{\Sigma}}_{\nabla}=\alpha\hat{\bm{\Sigma}}_{\nabla}+(1-\alpha)\tilde{\bm{\Sigma}}_{\nabla},
\]
where $\alpha\in[0,1]$ determines the amount of regularisation. We
choose $\alpha=\text{min}(n_{e}/d^{2},1)$, where $n_{e}=(\sum_{i}\omega_{i})^{2}/\sum_{i}\omega_{i}^{2}$ is the
Kish's Effective Sample Size (ESS) implied by the weights. Given that $\tilde{\bm{\Sigma}}_{\nabla}$ is an inconsistent estimator in general, it is desirable that $\alpha\rightarrow 1$ as $n$ increases. By considering a simplified setting, SM \ref{app:ESSgrowth} proves that $\mathbb{E}(n_{e})$ is $O\{n\,\text{min}(1-\tau, \tau)\}$ when fitting extreme quantiles. If we assume that $d = O(n^{1/5})$, which is a relatively fast rate of basis growth for penalised regression splines (see e.g. \cite{wood2006generalized}, Section 5.2), we have that $\alpha = \text{min}[O\{n^{3/5}\text{min}(1-\tau, \tau)\}, 1]$.

\subsection{Asymptotic behaviour of $\mathbb{E}(n_e)$} \label{app:ESSgrowth}

Consider a simplified setting where $y_1, \dots, y_n$ are i.i.d. random variables and $\mu$ is a scalar. Without loss of generality, set $\tau\geq0.5$, $\sigma_0=1$, and notice that
\begin{align*}
\frac{\mathbb{E}(n_{e})}{n} & = \frac{\mathbb{E}\{(n^{-1}\sum_{i}\omega_{i})^{2}\}}{\mathbb{E}(n^{-1}\sum_{i}\omega_{i}^{2})} + O(n^{-1}) \geq \frac{\mathbb{E}(\omega)^2}{\mathbb{E}(\omega^{2})} + O(n^{-1}) \\ & \geq \frac{\tilde{\tau}^2 \text{Prob}\{y\geq Q(\tilde{\tau})\}^{2}}{(1-\tau)^2\text{Prob}[y\leq Q\{2(1-\tau)\}]+\tau^2\text{Prob}[y > Q\{2(1-\tau)\}]} + O(n^{-1}),
\end{align*}
for any $\tilde{\tau}\in(1-\tau,1)$, where $Q(u)=\mu+\lambda\text{log}\{u/(1-u)\}$ is the logistic quantile function. If we set $\tilde{\tau} = 0.5$ and evaluate $\mathbb{E}(n_{e})/n$ at $\mu = F^{-1}(\tau)$, where $F$ is the c.d.f. of $y$ under $\mathbb{P}$, we have that 
\begin{align*}
\frac{\mathbb{E}(n_{e})}{n} & \geq \frac{1}{4}\frac{(1-\tau)^2}{(1-\tau)^2\tau+\tau^2(1-\tau)} + O(\lambda + n^{-1}) \propto 1-\tau + O\{(1-\tau)^2\} + O(\lambda + n^{-1}),
\end{align*}
As $n\rightarrow\infty$, consistency requires that $\lambda\rightarrow 0$, so $\mathbb{E}(n_{e})$ is $O\{n(1-\tau)\}$ when fitting extremely high quantiles ($\tau\approx 1$). Similar steps prove that $\mathbb{E}(n_{e})$ is $O(n\tau)$ when $\tau \approx 0$. 

\section{Details on the electricity forecasting application} \label{app:electrDetails}

We remove from the UK and French datasets all data between the 21st of December and the 4th of January (included) because, in an operational setting, forecasting electricity demand during this period requires manual intervention, as demand behaviour is anomalous relative to the rest of the year. For the same reason we exclude from the French dataset the period between the 26th of July and the 24th of August (included).

To forecast load one week ahead, we use the observed temperature over that week. Obviously future temperatures would not be available in an operational setting, and a forecast would be used instead. But using a forecast would add further uncertainty to the results of the comparison performed here, hence we prefer using observed temperatures. Week by week we predict the load for the next seven days, and then we re-fit all models using the newly observed values of load and temperature.

\end{appendices}



\end{document}